\documentclass{llncs}

\usepackage[left=1.3in,right=1.3in,top=1in,bottom=1in]{geometry}
\usepackage{graphicx}

\newcommand{\nd}{neighborhood diversity}

\begin{document}

\title{Algorithmic Meta-theorems for Graphs of Bounded Vertex Cover}
\author{Michael Lampis}
\institute{Computer Science Department,\\ Graduate Center\\ City University of New York\\ \email{mlampis@gc.cuny.edu}}

\maketitle

\begin{abstract}

Possibly the most famous algorithmic meta-theorem is Courcelle's theorem, which
states that all MSO-expressible graph properties are decidable in linear time
for graphs of bounded treewidth. Unfortunately, the running time's dependence
on the MSO formula describing the problem is in general a tower of exponentials
of unbounded height, and there exist lower bounds proving that this cannot be
improved even if we restrict ourselves to deciding FO logic on trees.

In this paper we attempt to circumvent these lower bounds by focusing on a
subclass of bounded treewidth graphs, the graphs of bounded vertex cover. By
using a technique different from the standard decomposition and dynamic
programming technique of treewidth we prove that in this case the running time
implied by Courcelle's theorem can be improved dramatically,  from
non-elementary to doubly and singly exponential for MSO and FO logic
respectively. Our technique relies on a new graph width measure we introduce,
for which we show some additional results that may indicate that it is of
independent interest. We also prove lower bound results which show that our
upper bounds cannot be improved significantly, under widely believed complexity
assumptions.  Our work answers an open problem posed by Michael Fellows.

\end{abstract}

\section{Introduction}

Algorithmic metatheorems are general statements of the form ``\emph{All
problems sharing property P, restricted to a class of inputs I can be solved
efficiently}''. The archetypal, and possibly most celebrated, such metatheorem
is Courcelle's theorem which states that every graph property expressible in
monadic second-order (MSO) logic is decidable in linear time if restricted to
graphs of bounded treewidth \cite{Courcelle90}. Metatheorems have been a
subject of intensive research in the last years producing a wealth of
interesting results.  Some representative examples of metatheorems with a
flavor similar to Courcelle's can be found in the work of Frick and Grohe
\cite{FrickG01}, where it is shown that all properties expressible in first
order (FO) logic are solvable in linear time on planar graphs, and the work of
Dawar et al.  \cite{DawarGKS06}, where it is shown that all FO-definable
optimisation problems admit a PTAS on graphs excluding a fixed minor (see
\cite{Grohe07} and \cite{HlinenyOSG08} for more results on the topic). In all
these works the defining property P for the problems studied is given in terms
of expressibility in a logic language; in many cases metatheorems are stated
with P being some other problem property, for example whether the problem is
closed under the taking of minors. This approach, which is connected with the
famous graph minor project of Robertson and Seymour \cite{RobertsonS83} has
also led to a wealth of significant and practical results, including the so
called bi-dimensionality theory (see \cite{DemaineH08} for an overview and also
the recent results of \cite{BodlaenderFLPST09}).

In this paper we focus on the study of algorithmic metatheorems in the spirit
of Courcelle's theorem, where the class of problems we attack is defined in
terms of expressibility in a logic language. In this research area, many
interesting extensions have followed Courcelle's result: for instance,
Courcelle's theorem has been extended to logics more suitable for the
expression of optimisation problems \cite{ArnborgLS91}. It has also been investigated
whether it's possible to obtain similar results for larger graph classes (see
\cite{CourcelleMR00} for a metatheorem for bounded cliquewidth graphs,
\cite{FominGLS09} for corresponding hardness results and \cite{KreutzerT10} for
hardness results for graphs of small but unbounded treewidth). Finally, lower
bound results have been shown proving that the running times predicted by
Courcelle's theorem can not be improved significantly in general
\cite{FrickG04}. 

This lower bound result is one of the main motivations of this work, because in
some ways it is quite devastating. Though Courcelle's theorem shows that a vast
class of problems is solvable in linear time on graphs of bounded treewidth,
the ``hidden constant'' in this running time, that is, the running time's
dependence on the input's other parameters, which are the graph's treewidth and
the formula describing the problem, is in fact a tower of exponentials.
Unfortunately, in \cite{FrickG04} it is shown that this tower of exponentials
is unavoidable even if we restrict ourselves to deciding FO logic on trees.

From the point of view of meta-theorems the above lead to a rather awkward
situation where a large family of problems can quickly be characterized as
``easy'' on bounded treewidth graphs (by showing the existence of an equivalent
MSO formula), but at the same time we know that at least some of them will in
fact be very hard to solve. Nevertheless, it should be noted that treewidth
research has been an extremely fruitful area and a cornerstone of parameterized
complexity theory, exactly because a large number of generally hard problems is
solvable efficiently (and practically) for graphs of small treewidth (see
\cite{BodlaenderK08} for an excellent survey and the relevant chapters in the
standard parameterized complexity textbooks
\cite{downey1999parameterized,flum2006parameterized,niedermeier2006invitation}).
This apparent disparity between the seemingly prohibitive lower bounds and the
good behavior treewidth exhibits in practice is not due to a huge gap between
the theory and practice\footnote{or more precisely in our case, between
metatheory and theory} of algorithm design for graphs of small treewidth;
rather, as pointed out in Grohe's splenid survey of the field \cite{Grohe07}
the exponential tower in the running time can only be caused by a high number
of quantifier alternations in $\phi$, the formula expressing the problem.
Because many interesting optimization problems can be expressed in MSO logic
with an extremely small number of alternations between existential and
universal quantifiers, they can usually be solved easily. However, this leaves
unanswered the question of what can we do with problems that cannot be
expressed using an extremely small number of quantifier alternations, because
even a modest number of alternations can cause the running time implied by
Courcelle's theorem to sky-rocket.

The above naturally motivate the question of whether a stronger meta-theorem
than Courcelle's can be shown for a sub-class of bounded-treewidth graphs, that
is, a meta-theorem which would imply that all of MSO logic can be solved in
time not only linear in the size of the graph, but also depending reasonably on
the secondary parameters. This question was posed explicitly by Fellows in
\cite{FellowsOpen} for the case of graphs of bounded vertex cover.
Incidentally, this is a class of graphs that has attracted research efforts
again in the past (\cite{FellowsLMRS08}), but in the different direction of
attempting to solve problems which are normally hard for bounded treewidth
graphs and not expressible in MSO logic. The class of bounded vertex cover
graphs is a logical target to attack because the lower bounds we mentioned also
apply to other special cases of bounded treewidth, such as bounded feedback
vertex set (since the lower bound applies even to trees) and bounded pathwidth
(again by \cite{FrickG04}, though not mentioned explicitly). This leaves
bounded vertex cover, which is a further restriction of these as a natural next
candidate.

The main results of this paper show that meta-theorems stronger than
Courcelle's can indeed be shown for this class of graphs. In addition, we show
that our meta-theorems cannot be significantly improved under standard
complexity assumptions. 

In addition to the theoretical interest of these results, there is a potential
for many practical applications here for the many MSO-expressible problems
which require several quantifier alternations to be expressed and are therefore
likely to be hard to solve efficiently for graphs of small treewidth.  Notably,
this class of problems contains for example many two-player games on graphs,
such as Short Generalized Geography and Short Generalized Hex. Such problems
can be expressed in FO logic, a property which generally doesn't seem to
improve things in the case of treewidth but, as we show, improves the running
time exponentially for graphs of small vertex cover.

Specifically, for the class of graphs of vertex cover bounded by $k$ we show
that

\begin{itemize}

\item All graph problems expressible with an FO formula $\phi$ can be solved in
time linear in the graph size and singly exponential in $k$ and $|\phi|$.

\item All graph problems expressible with an MSO formula $\phi$ can be solved
in time linear in the graph size and doubly exponential in $k$ and $|\phi|$.

\item Unless P=NP, there is no algorithm which can decide if an MSO formula
$\phi$ holds in a graph $G$ of vertex cover $k$ in time $f(k,\phi)n^c$, for any
$f(k,\phi)=2^{O(k+|\phi|)}$. Unless $n$-variable 3SAT can be solved in time
$2^{o(n)}$ (that is, unless the exponential time hypothesis fails), then the
same applies for $f(k,\phi)=2^{2^{o(k+|\phi|)}}$.

\item Unless FPT=W[1], there is no algorithm which can decide if an FO formula
$\phi$ with $q$ quantifiers holds in a graph $G$ of vertex cover $k$ in time
$f(k,q)n^c$, for any $f(k,q)=2^{O(k+q)}$.

\end{itemize}

Our upper bounds rely on a technique different from the standard dynamic
programming on decompositions usually associated with treewidth; namely we
exploit an observation that for FO logic two vertices that have the same
neighbors are ``equivalent'' in a sense we will make precise. We state our
results in terms of a new graph ``width'' parameter that captures this graph
property more precisely than bounded vertex cover. We call the new parameter
\nd, and the upper bounds for vertex cover follow by showing that bounded
vertex cover is a special case of bounded \nd.  Our essentially matching lower
bounds on the other hand are shown for vertex cover.  In the last section of
this paper we prove some additional results for \nd, beyond the algorithmic
meta-theorems of the rest of the paper, which we believe indicate that \nd\
might be a graph structure parameter of independent interest and that its
algorithmic and graph-theoretic properties may merit further investigation.

\section{Definitions and Preliminaries}


\subsection{Model Checking, FO and MSO logic}

In this paper we will describe algorithmic meta-theorems, that is, general
methods for solving all problems belonging in a class of problems. However, the
presentation is simplified if one poses this approach as an attack on a simple
problem, the model checking problem. In the model checking problem we are given
a logic formula $\phi$, expressing a graph property, and a graph $G$, and we
must decide if the property described by $\phi$ holds in $G$.  In that case, we
write $G\models \phi$. Clearly, if we can describe an efficient algorithm for
model checking for a specific logic, this will imply the existence of efficient
algorithms for all problems expressible in this logic. Let us now give more
details about the logics we will deal with and the graphs which will be our
input instances.

Our universe of discourse will be labeled, colored graphs. Specifically, we
will assume that the input to our model checking problem consists of a sentence
$\phi$ (in languages we define below) and a graph $G(V,E)$ for which we are
also given a set of labels $L$, each identified with some vertex of $G$ and a
collection of (not necessarily disjoint) subsets of $V$, which we will
informally refer to as color classes.  We will usually denote the set of color
classes of $G$ as $\mathcal{C}=\{C_1,C_2,\ldots,C_c\}$.  The problem we are
truly interested in solving is model checking for unlabeled, uncolored graphs,
which is of course a special case of our definition when $L=\emptyset$ and
$\mathcal{C}=\emptyset$.  The additional generality in our definition is
convenient for two reasons: first, it allows us to more easily express some
problems (for example, checking for a Hamiltonian path with prescribed
endpoints). In addition, in the process of deciding a sentence $\phi$ on a
graph $G$ our algorithm will check through several choices for the vertex and
set variables of $\phi$, which will need to be remembered later by placing a
label on a picked vertex or a color on a picked set of vertices.  Thus, dealing
from the beginning with colored labeled graphs can help to simplify many proofs
by unifying our approach. 

Thus, from now on, we will use the term graph to refer to a labeled colored
graph, that is, a graph $G(V,E)$, a set $L$ and a function $L\to V$, and a set
of colors $\mathcal{C}$ and a function $\mathcal{C}\to 2^V$. We will simply
write $G$ to denote a graph, meaning a graph with this extra information
supplied, unless the labels and colors of $G$ are not immediately clear from
the context. Also, we usually denote $|V|$ by $n$ and for a vertex $v\in V$ we
will write $N(v)$ for the neighborhood of $V$, that is $N(v)=\{u\in V\ |\
(u,v)\in E\}$.

The formulas of FO logic are those which can be constructed inductively using
vertex variables, which we usually denote as $x_i,y_i,\ldots$, vertex labels,
which we usually denote as $l_i$, color classes which we will denote by $C_i$,
the predicates $E(x_i,x_j)$, $x_i\in C_j$, $x_i = x_j$ which can operate on
vertex variables or labels, the usual propositional connectives and the
quantifiers $\exists,\forall$ operating on vertex variables. If a formula
$\phi(x)$ contains an unbound variable $x$ and $l$ is a vertex label we will
denote by $\phi(l)$ the formula obtained by replacing all occurences of $x$
with $l$. A formula $\phi$ is called a sentence if all the variables it
contains are bound by quantifiers.

We define the semantics of FO sentences inductively in the usual way. We will
say that a sentence $\phi$ is true for a labeled colored graph $G$ and write
$G\models \phi$ iff all the vertex and color labels which appear in $\phi$ also
appear in $G$ and 

\begin{itemize}
\item If $\phi=E(l_1,l_2)$ for $l_1,l_2$ two labeled vertices of $G$ which are connected by an edge
\item If $\phi=\ l=l$ for any label $l$
\item If $\phi=l\in C$ for a labeled vertex $l$ which belongs in the color class $C$
\item If $\phi=\phi_1\lor\phi_2$ and $G\models \phi_1$ or $G\models \phi_2$
\item If $\phi=\neg\phi'$ and it is not true that $G\models\phi'$ 
\item If $\phi=\exists x \phi'(x)$ and there exists a vertex of $G$ such that
if $G'$ is the same graph as $G$ with the addition of a new label $l$
identified with that vertex we have $G'\models \phi'(l)$ 
\item If $\phi=\forall x \phi'(x)$ and $G\models \neg \exists x\neg \phi'(x)$

\end{itemize}

Observe that it is not possible for a FO sentence to refer to a specific vertex
of $G$ unless it is labeled.

MSO logic can now be defined in the same way with the addition of set
variables. MSO formulas are constructed in the same way as FO formulas except
that we are now allowed to use set variables $X_i$ and quantify over them, and
the $\in$ predicate can also refer to such variables in addition to color
classes.

For the semantics, we just need to discuss the additional components. In
addition to the rules for FO logic we have that $G\models\phi$ if

\begin{itemize} 

\item  $\phi=\exists X \phi'(X)$ and there exists in $G$ a set of vertices $S$
such that if $G'$ is the same graph as $G$ with the set $S$ added to the set
$\mathcal{C}$ of color classes of $G$ we have $G'\models \phi'(S)$

\end{itemize}

Note that, the MSO logic we have defined here is sometimes also referred to in
the literature as MSO$_1$ logic. This is done to differentiate it from MSO$_2$
logic, where one is also allowed to quantify over sets of edges, not just
vertices. In this paper we focus mostly on MSO$_1$, but we offer some
discussion on MSO$_2$ in Section \ref{sec:nd}.

\subsection{Bounded Vertex Cover and \nd}

Throughout this paper our objective is to prove algorithmic meta-theorems for
graphs of bounded vertex cover, that is, graphs for which there exists a small
set of vertices whose removal also removes all edges. We will usually denote
the size of a graph's vertex cover by $k$. Note that there exist linear-time
FPT algorithms for finding an optimal vertex cover in graphs where $k$ is small
(see e.g. \cite{ChenKX06}). 

Our technique relies on the fact that in a graph of vertex cover $k$, the
vertices outside the vertex cover can be partitioned into at most $2^k$ sets,
such that all the vertices in each set have exactly the same neighbors outside
the set and each set contains no edges inside it. Since we do not make use of
any other special property of graphs of small vertex cover, we are motivated to
define a new graph parameter, called \nd, which intuitively seems to give the
largest graph family to which we can apply our method in a straightforward way. 

\begin{definition}

We will say that two vertices $v,v'$ of a graph $G(V,E)$ have the same type iff
they have the same colors and $N(v)\setminus\{v'\} = N(v') \setminus \{v\}$.

\end{definition}

\begin{definition}

A colored graph $G(V,E)$ has \nd\ at most $w$, if there exists a partition of
$V$ into at most $w$ sets, such that all the vertices in each set have the same
type.

\end{definition}

\begin{lemma} \label{lem:vcnd}

If an uncolored graph has vertex cover at most $k$, then it has \nd\ at most
$2^k+k$.

\end{lemma}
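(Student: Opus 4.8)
The plan is to exhibit a partition of $V$ into at most $2^k+k$ classes of identical type, using a fixed minimum vertex cover as the backbone. First I would fix a vertex cover $S$ with $|S|\le k$. Since the graph is uncolored, the color condition in the definition of \emph{type} is vacuous, so two vertices $v,v'$ have the same type exactly when $N(v)\setminus\{v'\}=N(v')\setminus\{v\}$.

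The construction splits $V$ into the cover vertices and the rest. Each of the at most $k$ vertices of $S$ is placed in its own singleton class; a singleton trivially forms a class of identical type, and this accounts for the additive $k$ term. The remaining vertices lie in the independent set $V\setminus S$ (independent because $S$ is a vertex cover), so every edge incident to such a vertex has its other endpoint in $S$; consequently $N(v)\subseteq S$ for each $v\in V\setminus S$. I would then group the vertices of $V\setminus S$ according to their neighborhoods, placing $v$ and $v'$ in the same class precisely when $N(v)=N(v')$. Since these neighborhoods are subsets of $S$ and $|S|\le k$, there are at most $2^k$ distinct neighborhoods, hence at most $2^k$ such classes.

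It remains to check that each such class really is a class of identical type. Take $v,v'\in V\setminus S$ with $N(v)=N(v')$. Because $V\setminus S$ is independent, $v$ and $v'$ are non-adjacent, so $v'\notin N(v)$ and $v\notin N(v')$; therefore $N(v)\setminus\{v'\}=N(v)=N(v')=N(v')\setminus\{v\}$, and the two vertices indeed have the same type. Summing the two parts of the partition yields at most $2^k+k$ classes, which is exactly the claimed bound on \nd.

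I do not expect a genuine obstacle here: the only point requiring care is the $\{v'\}$ and $\{v\}$ correction terms in the definition of type, which is precisely why the vertices outside the cover must form an independent set for the grouping by neighborhood to produce classes of identical type. A secondary sanity check is that the cover vertices are kept as singletons rather than merged, so no assumption about their mutual adjacencies or their neighborhoods is needed, and the rest is a straightforward counting argument.
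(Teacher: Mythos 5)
Your proposal is correct and follows essentially the same construction as the paper's proof: singleton classes for the at most $k$ cover vertices, plus one class per subset of the cover for the independent-set vertices grouped by their neighborhood. The extra verification you include (that independence of $V\setminus S$ makes the $\setminus\{v'\}$ correction in the type definition harmless) is a detail the paper leaves implicit but does not change the argument.
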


\begin{proof}

Construct $k$ singleton sets, one for each vertex in the vertex cover and at
most $2^k$, one for each subset of vertices of the vertex cover. Place each of
the vertices of the independent set in one of these sets, specifically the one
which corresponds to its neighborhood in the vertex cover. \qed

\end{proof}

In Section \ref{sec:nd} we will show that \nd\ can be computed in
polynomial time and also prove some results which indicate it may be an
interesting parameter in its own right. However, until then our main focus will
be graphs of bounded vertex cover. We will prove all our algorithmic results in
terms of \nd\ and then invoke Lemma \ref{lem:vcnd} to obtain our main
objective. We will usually assume that a partition of the graph into sets with
the same neighbors is given to us, because otherwise one can easily be found in
linear time by using the mentioned linear-time FPT algorithm for vertex cover
and Lemma \ref{lem:vcnd}.

\section{Model checking for FO logic}

In this Section we show how any FO formula can be decided on graphs of bounded
vertex cover using a method that can dramatically improve efficiency, compared
to the standard treewidth-based technique described in Courcelle's theorem. Our
main argument is that for FO logic, two vertices which have the same neighbors
are essentially equivalent. We will prove our results in the more general case
of bounded \nd\ and then show the corresponding result for bounded vertex cover
as a corollary.

Recall that the standard way of deciding an FO formula on a graph is, whenever
we encounter an existential quantifier to try all possible choices of a vertex
for that variable. Because in a graph with small neighborhood diversity most
vertices are equivalent the running time can be drastically reduced.

\begin{lemma} \label{lem:equiv}

Let $G(V,E)$ be a graph and $\phi(x)$ a FO formula with one free variable.  Let
$v,v'\in V$ be two distinct unlabeled vertices of $G$ that have the same type.
Then $G\models \phi(v)$ iff $G\models \phi(v')$.

\end{lemma}

\begin{proof}

Suppose without loss of generality that $\phi(x)$ is in prenex normal form and
has quantifier depth $q$.  We remind the reader that the computation for
$\phi(v)$ can be evaluated by means of a rooted $n$-ary computation tree of
height $q$, where $n=|V|$.  Informally, the children of the root represent the
$n$ possible choices for the first quantified variable of the formula, their
children the choices for the second and so on. Each leaf represents a possible
$q$-tuple of choices for the variables and makes the formula true or false.
Internal nodes compute a value either as the logical disjunction of their
children (for existentially quantified variables) or the logical conjunction
(for universally quantified variables). The value computed at the root is the
truth value of $\phi(v)$.

We will prove the statement by showing a simple correspondence between the
computation trees for $\phi(v)$ and $\phi(v')$. Let $T$ and $T'$ be the two
trees, and label every node of each tree at distance $i$ from the root with a
different tuple of $i$ vertices of $G$ (note that the labels of the tree are
not to be confused with the labels of $G$).  Let $sw_{v,v'}:
\bigcup_{i=1,\ldots,q} V^i \to \bigcup_{i=1,\ldots,q} V^i$ be the ``swap''
function which when given a tuple of vertices of $V$, returns the same tuple
with all occurences of $v$ replaced by $v'$ and vice-versa. As a shorthand,
when $Q$ is a tuple of vertices and $u$ a vertex we will write $(Q,u)$ to mean
the tuple containing all the elements of $Q$ with $u$ added at the end. With
this notation the children of a node with label $Q$ are the nodes with labels
in the set $\{ (Q,u) \ |\ u\in V\}$.

Every leaf in both trees has a $q$-tuple as a label. Let $Q_1$ be such a
$q$-tuple which is the label of a leaf in $T$ and $sw_{v,v'}(Q_1)$ the tuple we
get from $Q_1$ by swapping $v$ with $v'$. The claim is that the leaf of $T$
with label $Q_1$ and the leaf of $T'$ with label $sw_{v,v'}(Q_1)$ evaluate to
the same value. In other words, if we take $\phi(v)$ and replace all quantified
variables with the vertices of $Q_1$ the formula will evaluate to the same
result as when we replace all the quantified variables of $\phi(v')$ with the
vertices of $sw_{v,v'}(Q_1)$. This is true because $\phi$ is a boolean function
of edge, color and equality predicates; color predicates and edge predicates
involving one of $v,v'$ with another vertex are unaffected by swapping $v$ and
$v'$, since these two have the same neighbors and belong in the same color
classes.  Equality predicates are also unaffected since all occurences of $v$
are replaced by $v'$ and vice-versa, thus equality predicates involving these
two and some other vertex will still evaluate to false, while predicates only
involving these two will be unaffected because equality is symmetric. Finally,
edge predicates involving only $v$ and $v'$ are unaffected since $E()$ is
symmetric. Thus, we have established a one-to-one correspondence between the
leaves of $T$ and $T'$ via the function $sw_{v,v'}$, preserving truth values.

Now, we need to establish a correspondence between the internal nodes, again
via $sw_{v,v'}$. Consider a node of $T$ with label $Q_1$ and the node of $T'$
with label $sw_{v,v'}(Q_1)$. The children of the former have labels in the set
$C_1=\{ (Q_1,u)\ |\ u\in V\}$. The children of the latter have labels in
$C_2=\{ (sw_{v,v'}(Q_1),u)\ |\ u\in V\}$. It is not hard to see that $C_2 = \{
sw_{v,v'}(Q)\ |\ Q\in C_1\}$, or in other words, the correspondence between
nodes is transferred up the levels of the trees.

The only remaining part is to establish that if two nodes in $T$ and $T'$ have
labels corresponding via $sw_{v,v'}$, then they compute the same value. We
already established this for the leaves. For internal nodes, this follows from
the fact that the sets of children of two corresponding nodes are also in
one-to-one correspondence via $sw_{v,v'}$ and that the nodes are both of the
same type (existential or universal) since only nodes at the same level can be
corresponding. Thus, by an inductive argument, all the children of the roots of
the two trees compute the same values and therefore $\phi(v)$ and $\phi(v')$
are equivalent. \qed

\end{proof}

\begin{theorem} \label{thm:fo}

Let $\phi$ be a FO sentence of quantifier depth $q$. Let $G(V,E)$ be a labeled
colored graph with \nd\  at most $w$ and $l$ labeled vertices.  Then, there is
an algorithm that decides if $G\models \phi$ in time $O((w+l+q)^q\cdot
|\phi|)$. 

\end{theorem}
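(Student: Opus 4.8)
The plan is to evaluate $\phi$ by the same branching computation tree described in the proof of Lemma \ref{lem:equiv}, but to prune it dramatically using that lemma so that each quantifier branches over only a bounded number of representative vertices rather than all $n$. First I would put $\phi$ in prenex normal form and write it as a block of $q$ quantifiers followed by a quantifier-free matrix, exactly as in Lemma \ref{lem:equiv}. I would then process the quantifiers from outermost to innermost, maintaining the invariant that whenever a vertex is chosen for a quantified variable it is fixed by attaching a fresh label to it, so that after processing $i$ quantifiers the graph carries $l+i$ labeled vertices while the type partition (of size at most $w$) is unchanged, since the definition of \emph{type} depends only on colors and neighborhoods and is therefore insensitive to labeling.

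The key reduction is this: at an existential quantifier the standard algorithm computes $\bigvee_{u \in V}$ of the truth value obtained by substituting $u$, and at a universal quantifier it computes $\bigwedge_{u \in V}$. I claim both can be restricted to a set of \emph{representatives} consisting of every currently labeled vertex together with one arbitrarily chosen unlabeled vertex from each type class. The correctness of this restriction is exactly Lemma \ref{lem:equiv}, applied to the one-free-variable formula obtained from the remaining quantifiers after the already-fixed outer variables have been replaced by their labels: any two unlabeled vertices lying in the same type class have the same type and hence contribute identical truth values, so the disjunction (resp. conjunction) over all of $V$ equals the disjunction (resp. conjunction) over the representatives.

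Next I would bound the size of the pruned tree. When the $(i+1)$-th quantifier (for $i = 0, 1, \dots, q-1$) is processed there are $l+i$ labeled vertices and at most $w$ type classes, so the number of representatives, and hence the branching factor at that level, is at most $w + l + i \le w + l + q$. Multiplying over the $q$ levels, the pruned tree has at most $(w+l+q)^q$ leaves and $O((w+l+q)^q)$ nodes in total. At each node I would spend $O(|\phi|)$ time: at a leaf to evaluate the quantifier-free matrix by checking each of its edge, color and equality predicates against the fixed labels, and at an internal node to combine its children by a conjunction or disjunction. This yields the claimed bound $O((w+l+q)^q \cdot |\phi|)$.

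The step I expect to require the most care is the correctness of the pruning, namely verifying that Lemma \ref{lem:equiv} genuinely applies at every level even as labels accumulate. The point to get right is that labeling a chosen vertex removes it from the pool of interchangeable unlabeled vertices of its class but leaves the remaining unlabeled vertices of that class mutually equivalent, and that attaching labels elsewhere never alters any vertex's type; once this is pinned down the inductive argument over the quantifier blocks goes through directly, with each application of the lemma justifying the replacement of a full disjunction or conjunction by its restriction to representatives.
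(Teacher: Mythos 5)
Your proposal is correct and follows essentially the same route as the paper's proof: prenex normal form, then pruning the computation tree at each quantifier to the labeled vertices plus one representative per type class, justified by Lemma~\ref{lem:equiv}, with branching factor $w+l+i$ at level $i$. The only cosmetic difference is that you handle universal quantifiers directly as restricted conjunctions, whereas the paper reduces them to the existential case by negation; your explicit remark that attaching labels does not alter vertex types is a point the paper leaves implicit.
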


\begin{proof}

We will rely heavily on Lemma~\ref{lem:equiv} and describe an inductive
argument. If $q=0$ the problem is of course trivial so assume that $q>0$ and
the theorem holds for sentences of depth at most $q-1$.  Also, assume wlog that
$\phi$ is in prenex normal form and furthermore, that $\phi = \exists x
\psi(x)$, since the universal case can be easily decided if we solve the
existential case, by deciding on the negation of $\phi$.

Suppose that $V$ can be partitioned into $V_1,V_2,\ldots V_w$ as required by
the definition of \nd. Now, by Lemma~\ref{lem:equiv} if $v,v'\in V_i$ for some
$i$, and neither of the two is labeled then $G\models \psi(v)$ iff $G\models
\psi(v')$. Thus, we need to model check at most $(w+l)$ sentences of $q-1$
quantifiers to decide $\phi$: we try replacing $x$ with each of the $l$ labeled
vertices or with one arbitrarily chosen representative from each $V_i$. In the
process we introduce a new label. Repeating this process constructs a
computation tree with at most $\prod_{i=0}^{q-1} (w+l+i) =
O\left((w+l+q)^q\right)$ leaves.  The result of the computation tree can be
evaluated in time linear in its size.  \qed

\end{proof}

\begin{corollary}

There exists an algorithm which, given a FO sentence $\phi$ with $q$ variables
and an uncolored, unlabeled graph $G$ with vertex cover at most $k$, decides if
$G\models \phi$ in time $2^{O(kq+q\log q)}$.

\end{corollary}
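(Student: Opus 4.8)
The plan is to combine Theorem~\ref{thm:fo} with Lemma~\ref{lem:vcnd} and then simplify the resulting bound. The corollary is a direct consequence of the two previous results once we instantiate the parameters correctly, so the main work is bookkeeping with the exponents rather than any new combinatorial idea.

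First I would apply Lemma~\ref{lem:vcnd} to the input graph $G$: since $G$ has vertex cover at most $k$, it has \nd\ at most $w=2^k+k$. Next I would note that $G$ is uncolored and unlabeled, so the number of labeled vertices is $l=0$ (the labels introduced during the algorithm are accounted for internally by the recursion in Theorem~\ref{thm:fo}, not counted here). The sentence $\phi$ has $q$ variables, hence quantifier depth at most $q$. Plugging $w=2^k+k$, $l=0$, and quantifier depth $q$ into the running time $O((w+l+q)^q\cdot|\phi|)$ from Theorem~\ref{thm:fo} gives a bound of $O\bigl((2^k+k+q)^q\cdot|\phi|\bigr)$.

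The remaining step is to show that $(2^k+k+q)^q = 2^{O(kq+q\log q)}$, which is where I would be slightly careful. The base satisfies $2^k+k+q \le 2^{k+2} + q$ for all $k\ge 1$, and more crudely $2^k+k+q \le 2^{k+1}(1+q)$, so taking logarithms gives $\log(2^k+k+q) = O(k+\log q)$. Raising to the $q$-th power multiplies the exponent by $q$, yielding $q\cdot O(k+\log q) = O(kq + q\log q)$ in the exponent, hence $(2^k+k+q)^q = 2^{O(kq+q\log q)}$. The factor $|\phi|$ is polynomial in the input size and is absorbed into the $O(\cdot)$ notation in the exponent (or left as a polynomial factor), so the overall running time is $2^{O(kq+q\log q)}$ as claimed.

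I do not expect any genuine obstacle here; the only point requiring a moment's attention is the estimate $\log(2^k+k+q)=O(k+\log q)$, and in particular making sure the cross terms from adding $k$ and $q$ inside the base do not contribute more than $O(kq+q\log q)$ after exponentiation. Since the dominant contributions are the $2^k$ term (giving the $kq$ in the exponent) and the $q$ term (giving the $q\log q$), the bound follows cleanly.
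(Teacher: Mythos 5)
Your derivation is correct and is exactly the intended one: the paper gives no separate proof for this corollary, since it follows immediately from Theorem~\ref{thm:fo} with $w=2^k+k$ (Lemma~\ref{lem:vcnd}) and $l=0$, plus the same exponent bookkeeping $\log(2^k+k+q)=O(k+\log q)$ that you carry out. No gaps.
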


Thus, the running time is (only) singly exponential in the parameters, while a
straightforward observation that bounded vertex cover graphs have bounded
treewidth and an application of Courcelle's theorem would in general have a
non-elementary running time. Of course, a natural question to ask now is
whether it is possible to do even better, perhaps making the exponent linear in
the parameter, which is $(k+q)$.  As we will see later on, this is not possible
if we accept some standard complexity assumptions.

\section{Model checking for MSO logic}

In this section we will prove a meta-theorem for MSO logic. It's worth noting
again that the the logic we refer to as MSO is also sometimes called MSO$_1$
logic in the literature, because we only allow quantifications over vertex
sets, as opposed to MSO$_2$, where quantification over edge sets is also
allowed. Courcelle's theorem for treewidth also covers MSO$_2$ logic, which we
don't touch on in this Section, but we give some relevant discussion in
Section~\ref{sec:nd}.

First, let us show a helpful extension of the results of the previous Section.
From the following Lemma it follows naturally that the model checking problem
for MSO logic on bounded vertex cover graphs is in XP, that is, solvable in
polynomial time for constant $\phi$ and $k$, but our objective later on will be
to do better. We will again use the concept of vertex types; recall that two
vertices have the same type if they have the same neighbors and the same
colors.

\begin{lemma} \label{lem:msoxp}

Let $\phi(X)$ be an MSO formula with a free set variable $X$.  Let $G$ be
a graph and $S_1,S_2$ two sets of vertices of $G$ such that all vertices of
$(S_1\setminus S_2)\cup (S_2\setminus S_1)$ are unlabeled and have the same
type and furthermore $|S_1\setminus S_2|=|S_2\setminus S_1|$.  Then $G\models
\phi(S_1)$ iff $G\models \phi(S_2)$.

\end{lemma}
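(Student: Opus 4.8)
The plan is to prove this by generalizing the swap-argument from Lemma~\ref{lem:equiv} to the setting of set variables. The key observation is that since all vertices of the symmetric difference $(S_1\setminus S_2)\cup(S_2\setminus S_1)$ are unlabeled and share the same type, and the two ``excess'' parts have equal cardinality $|S_1\setminus S_2|=|S_2\setminus S_1|=m$, I can fix an arbitrary bijection $\sigma$ between $S_1\setminus S_2$ and $S_2\setminus S_1$ and extend it to an involution $\pi$ on $V$ that swaps each such paired vertex, and fixes everything else (everything in $S_1\cap S_2$, everything outside $S_1\cup S_2$, and all labeled vertices). By construction $\pi$ maps $S_1$ to $S_2$ setwise, and since all the swapped vertices have identical type, $\pi$ is an automorphism of $G$ that additionally preserves every color class and every label.

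First I would argue that $\pi$ is a color- and label-preserving automorphism: edge predicates are preserved because swapped vertices have the same neighbors outside the pair (and equal-type vertices are either both adjacent or both non-adjacent to any fixed vertex, and non-adjacency/adjacency within the pair is symmetric), color membership is preserved because same-type vertices lie in exactly the same color classes, and labels are fixed pointwise because no swapped vertex is labeled. Then I would set up the inductive claim on the structure of $\phi$, strengthened to handle free variables and the additional color classes that MSO quantification introduces: for any sentence $\psi$ (possibly after substituting labels/colors) and any graph $G^\ast$ obtained from $G$ by adding further labels or color classes that are themselves invariant under $\pi$, we have $G^\ast\models\psi$ iff $\pi(G^\ast)\models\psi$, where $\pi(G^\ast)$ applies $\pi$ to every label and color class. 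Since $\pi$ fixes all original labels and colors, $\pi(G)=G$, and $\pi$ sends the color class $S_1$ to $S_2$, so applying this to $\phi(X)$ with $X$ instantiated as $S_1$ yields $G\models\phi(S_1)$ iff $G\models\phi(S_2)$.

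The inductive step follows the semantics clause by clause. The atomic and propositional-connective cases, together with the first-order quantifier cases, mirror the reasoning in Lemma~\ref{lem:equiv} exactly (an automorphism preserves $E$, $=$, and $\in$ for color classes, and commutes with $\exists x$, $\forall x$). The genuinely new case is the set quantifier $\exists X\,\psi'(X)$: here a witnessing set $S$ for $G^\ast$ must be matched with its image $\pi(S)$, which is a legitimate witness for $\pi(G^\ast)$; crucially $\pi(S)$ is $\pi$-invariant as a color class only after we re-apply the induction hypothesis to the graph augmented by $S$ versus the graph augmented by $\pi(S)$, so I would phrase the induction so that the automorphism acts simultaneously on the witness. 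The main obstacle will be stating the inductive hypothesis at the right level of generality so that the freshly quantified set $S$ is handled cleanly---naively one wants $S$ to be $\pi$-invariant, which need not hold, so the correct formulation carries $\pi$ along as a map between the two augmented structures rather than demanding invariance of intermediate witnesses. Once the hypothesis is phrased as ``$\pi$ is an isomorphism between $G$-with-$S_1$-colored and $G$-with-$S_2$-colored,'' the entire argument reduces to the standard fact that isomorphic structures satisfy the same MSO sentences, and the remaining work is purely the verification, already sketched, that $\pi$ is such an isomorphism.
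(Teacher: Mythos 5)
Your proposal is correct, and it reaches the conclusion by a somewhat different route than the paper. The paper first reduces to the case $|S_1\setminus S_2|=|S_2\setminus S_1|=1$ (exchanging one pair of vertices at a time and chaining the equivalences), extends the swap function $sw_{v,v'}$ of Lemma~\ref{lem:equiv} to act on set variables, and then verifies leaf-by-leaf that the two computation trees for $\phi(S_1)$ and $\phi(S_2)$ compute the same values, lifting the correspondence up the levels of the tree. You instead build the entire permutation at once: a bijection between $S_1\setminus S_2$ and $S_2\setminus S_1$ extended to an involution $\pi$ of $V$, verify that $\pi$ is a label- and color-preserving automorphism of $G$ carrying $S_1$ to $S_2$ (the case analysis you sketch --- a swapped vertex against a fixed vertex, two vertices from distinct swap pairs, and the within-pair edge handled by symmetry of $E$ --- is exactly what the same-type hypothesis delivers), and then appeal to the invariance of MSO truth under isomorphism of colored labeled structures, with the isomorphism acting on quantified set witnesses rather than requiring them to be $\pi$-invariant. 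The two arguments rest on the same underlying fact; the paper's computation-tree correspondence is essentially an inlined proof of the isomorphism-invariance you cite. What your version buys is a cleaner statement (no induction on the size of the symmetric difference, and the "carry $\pi$ along to the augmented structures" formulation handles the set-quantifier case uniformly); what the paper's version buys is self-containedness and a template that is reused almost verbatim in Lemma~\ref{lem:msofpt}, where the freedom to choose which sample sets the computation tree explores is exploited quantitatively. If you state and prove (or cite) the isomorphism-invariance of MSO explicitly, your argument is complete.
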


\begin{proof}

The proof follows ideas similar to those of Lemma~\ref{lem:equiv}. Suppose that
$\phi(X)$ has $q$ quantifiers in total, then it is possible to decide if
$G\models \phi(S)$ using a computation tree such that for each quantified
variable we have nodes in the tree with $n$ children and for each quantified
set variable we have nodes with $2^n$ children, with each child corresponding
to a possible choice for that variable. Again, we can label each node of the
tree with a tuple of at most $q$ elements, but now the elements can be either
individual vertices or sets of vertices.

Observe that it suffices to prove the claim when $|S_1\setminus S_2| =
|S_2\setminus S_1| = 1$, because then we can apply the claim repeatedly to
transform $S_1$ to $S_2$ by exchanging the different vertices one by one. So,
suppose that $S_1\setminus S_2 = v$ and $S_2 \setminus S_1 = v'$, and $v$ and
$v'$ have the same type.

Now, the $sw_{v,v'}$ function of Lemma~\ref{lem:equiv} can be extended to act
on sets of vertices in a straightforward way. Consider the computation trees
for $\phi(S_1)$ and $\phi(S_2)$. Once again we must show that $sw_{v,v'}$ is a
one-to-one correspondence between the leaves of the two trees that preserves
truth values. For edge and equality predicates we can use the same arguments as
in Lemma~\ref{lem:equiv}, so the only difference can be with predicates of the
form $x\in X$. However, it is not hard to see that the truth values of these is
not affected when $x\neq v,v'$ and also when $X$ is one of the supplied colors
of the graph, since $v,v'$ have the same colors. Finally, the truth value is
also unaffected if $X$ is a variable set, since $sw_{v,v'}$ is applied both to
vertex and set variables. Now, the correspondence is lifted up the levels of
the tree using similar arguments and this completes the proof. \qed

\end{proof}

\begin{lemma} \label{lem:msofpt}

Let $\phi(X)$ be an MSO formula with one free set variable $X$, $q_V$
quantified vertex variables and $q_S$ quantified set variables. Let $G$ be a
graph and $S_1,S_2$ two sets of vertices of $G$ such that all vertices of
$(S_1\setminus S_2)\cup (S_2\setminus S_1)$ are unlabeled and belong in the
same type $T$. Suppose that both $|S_1\cap T|$ and $|S_2\cap T|$ fall in the
interval $[2^{q_S}q_V,|T|-2^{q_S}q_V-1]$. Then $G\models \phi(S_1)$ iff
$G\models \phi(S_2)$.

\end{lemma}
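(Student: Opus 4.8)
Lemma \ref{lem:msofpt} strengthens Lemma \ref{lem:msoxp}. In the earlier lemma, we could swap $S_1$ for $S_2$ if the symmetric difference consisted of same-type unlabeled vertices AND the two sets had equal-sized intersections with that type (i.e., $|S_1 \setminus S_2| = |S_2 \setminus S_1|$). That equal-cardinality constraint is restrictive.

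The new lemma relaxes this: it says we don't need equal cardinalities. Instead, as long as both $|S_1 \cap T|$ and $|S_2 \cap T|$ lie in a "middle interval" $[2^{q_S} q_V, |T| - 2^{q_S} q_V - 1]$, we get $G \models \phi(S_1) \iff G \models \phi(S_2)$.

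The intuition: a formula with $q_S$ set quantifiers and $q_V$ vertex quantifiers can only "distinguish" a bounded number of vertices within a type. Specifically, the bound $2^{q_S} q_V$ represents how many vertices the formula can possibly "pin down" via its quantifications. As long as we're not near the boundary (taking too few or too many vertices of type $T$), the exact count doesn't matter.

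Let me think about the proof strategy.

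**Setting up the approach:**The plan is to reduce the statement to a single \emph{atomic} operation — adding or deleting one vertex of type $T$ from the set argument — and then establish that atomic operation by induction on the number of quantifiers of $\phi$. First I observe that Lemma~\ref{lem:msoxp} already handles the case $|S_1\cap T|=|S_2\cap T|$: when the two intersections have equal size we may transform $S_1$ into $S_2$ by swapping the differing (same-type, unlabeled) vertices one at a time. Hence the only genuinely new phenomenon is a \emph{change} in the cardinality $|S\cap T|$, and it suffices to prove the following atomic claim: if $v\in T$ is unlabeled, $v\notin S$, and both $|S\cap T|$ and $|S\cap T|+1$ lie in the interval $[2^{q_S}q_V,\,|T|-2^{q_S}q_V-1]$, then $G\models\phi(S)$ iff $G\models\phi(S\cup\{v\})$. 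Given this claim I would change $|S_1\cap T|$ into $|S_2\cap T|$ one vertex at a time, staying inside the safe interval throughout, and then align the actual vertex sets using Lemma~\ref{lem:msoxp}.

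I would then prove the atomic claim by induction on the total number of quantifiers $q_S+q_V$, assuming (WLOG, via negation) that the outermost quantifier is existential. The base case is quantifier-free: the only atoms that can mention the free variable $X$ have the form $l\in X$ for a \emph{labeled} vertex $l$, and since $v$ is unlabeled such atoms are insensitive to adding $v$, so the truth value is unchanged. For an outer vertex quantifier $\phi(X)=\exists y\,\psi(X,y)$, a witness $u$ for $S$ with $u\ne v$ is handled by labeling $u$ — which removes at most one vertex from $T$, \emph{widening} the safe interval enough that the induction hypothesis for $\psi$ still applies — and then adding $v$ inside the shrunken type. The delicate sub-case is when the witness $u$ equals $v$ itself; here I first reroute the witness to another type-$T$ vertex $w\notin S$, which exists because the upper interval bound leaves many type-$T$ vertices outside $S$, and which is an equivalent witness by Lemma~\ref{lem:equiv} applied with $S$ treated as a color (both $v$ and $w$ lie outside $S$ and share a type). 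The reverse direction is symmetric, using the lower interval bound to supply a type-$T$ vertex inside $S$ when needed.

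The main obstacle is the outer set quantifier $\phi(X)=\exists Y\,\psi(X,Y)$. Given a witnessing set $R$ for $S$, the naive attempt is to reuse $R$ for $S\cup\{v\}$ and invoke the induction hypothesis for $\psi$ (which has one fewer set quantifier, hence threshold $2^{q_S-1}q_V$) inside the refined type $T\cap R$ or $T\setminus R$ containing $v$. This fails because $R$ may split $T$ very unevenly, so that $|S\cap(T\setminus R)|$ or $|S\cap(T\cap R)|$ falls outside its smaller safe interval even though $|S\cap T|$ was safe; this is exactly where the factor of two in the threshold is spent, since $2^{q_S}q_V=2\cdot 2^{q_S-1}q_V$ guarantees enough slack to first \emph{rebalance} the witness. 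Concretely, I would replace $R$ by an equivalent set $R'$ whose intersection with $T$ redistributes $S\cap T$ so that each refined sub-type lands in its $(q_S-1)$-safe interval, the equivalence $G\models\psi(S,R)$ iff $G\models\psi(S,R')$ being obtained by applying Lemma~\ref{lem:msoxp} and the induction hypothesis to the variable $Y$ with $S$ held fixed as a color. Once $R$ is rebalanced to $R'$, adding $v$ to $S$ is an atomic move inside a single refined sub-type with adequate slack, and the induction hypothesis for $\psi$ closes the case. Conceptually this expresses that the truth value of $\phi(S)$ depends only on the profile $\bigl(\min(|S\cap T|,2^{q_S}q_V),\ \min(|T\setminus S|,2^{q_S}q_V)\bigr)$ of $S$ on the type $T$, which is precisely what an Ehrenfeucht--Fra\"{\i}ss\'{e} argument with $q_S$ set moves and $q_V$ point moves would certify, the bound arising because $q_S$ set moves refine $T$ into at most $2^{q_S}$ classes and $q_V$ point moves consume at most $q_V$ vertices per class.
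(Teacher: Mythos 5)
Your reduction to the atomic operation (toggling one vertex of type $T$ in the set argument) and your handling of the vertex-quantifier case both check out, but the set-quantifier case has a genuine gap: the rebalancing step is not always available. To replace the witness $R$ by a balanced $R'$ you invoke the induction hypothesis for $\psi$ on the variable $Y$ with $S$ held fixed as a color; this requires both $|R\cap(T\cap S)|$ and $|R'\cap(T\cap S)|$ to lie in the interval $[2^{q_S-1}q_V,\ |T\cap S|-2^{q_S-1}q_V-1]$ (and similarly for $T\setminus S$). But the lemma's hypothesis only guarantees $|T\cap S|\ge 2^{q_S}q_V$, and when equality holds that interval is $[2^{q_S-1}q_V,\ 2^{q_S-1}q_V-1]$, i.e.\ empty, so the inductive hypothesis never licenses any change to $|R\cap(T\cap S)|$ (and Lemma~\ref{lem:msoxp} cannot help, since it preserves cardinalities). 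Concretely, take $q_S=1$, $q_V=10$, $|T|=100$, $|S\cap T|=20$, $v\in T\setminus S$, and a witness $R$ with $R\cap T=S\cap T$. Then $v$'s refined sub-type $T\setminus R$ contains no vertex of $S$, so the inductive step for adding $v$ to $S$ fails ($0<10$), and no rebalancing of $R$ over $(G,S)$ is justified. The lemma is still true in this configuration: the correct witness for $S\cup\{v\}$ is a set $R''$ with $R''\cap T=(S\cup\{v\})\cap T$, whose intersection sizes with the sub-types of $T$ differ from those of $R$. The lesson is that the new witness must be chosen directly in the structure $(G,S\cup\{v\})$, with a size profile that need not be reachable from $R$'s profile by equivalences over $(G,S)$; what must be maintained is only that corresponding sub-types in the two structures have sizes that are either equal or both at least the residual counting threshold.

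This is exactly where your route diverges from the paper's. The paper also reduces to $S_2=S_1\cup\{u\}$, but then argues about the sampling algorithm justified by Lemma~\ref{lem:msoxp}: when a set variable is assigned, only the sizes of its intersections with the current sub-types matter, so one may always arrange (by swapping $u$ with another vertex of its sub-type) that $u$ lands in the larger of the two pieces into which its sub-type is split. Since $u$ starts in a sub-type with at least $2^{q_S}q_V$ other members, after $q_S$ halvings and at most $q_V$ removals by vertex samples there is always another representative to sample instead of $u$; hence $u$ is never labeled and the two computations coincide. That argument never needs any chosen set's intersection with $u$'s sub-type to lie in a safe interval, which is precisely the requirement that sinks your rebalancing step. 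Your closing remark about the profile $\bigl(\min(|S\cap T|,2^{q_S}q_V),\ \min(|T\setminus S|,2^{q_S}q_V)\bigr)$ identifies the right invariant; an induction organized around that invariant for all sub-types simultaneously (a genuine Ehrenfeucht--Fra\"{\i}ss\'{e} argument) would close the gap, but the atomic-toggle-plus-rebalancing factorization you actually propose does not.
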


\begin{proof}

We are dealing with the case where two sets are different, but their different
elements are all of the same type. To give some intuition, in the base case of
$q_S=0$ for this particular type both sets have the property that the sets
themselves and their complements have at least $q_V$ vertices of the type. This
will prove important because $\phi(X)$ will be a FO sentence after we decide on
a set for $X$ and as we will see an FO sentence cannot distinguish between two
different large enough sets (informally, we could say that an FO sentence with
$q$ quantifiers can only count up to $q$). We will show how to extend this to
general $q_S$ by shrinking the interval of sizes where we claim that sets are
equivalent, because every set variable $X_i$ essentially doubles the amount we
can count, by partitioning vertices into two sets, those in $X_i$ and those in
its complement.

First, assume without loss of generality that $|S_1|\le |S_2|$. Now because of
Lemma~\ref{lem:msoxp} we can further assume without loss of generality that
$S_1\subseteq S_2$, because there exists a set $S_2'$ of the same size as $S_2$
such that $S_1\subseteq S_2'$ and $G\models \phi(S_2)$ iff $G\models
\phi(S_2')$.  Furthermore, we may focus on the case where $S_2=S_1\cup\{u\}$
for some vertex $u\not\in S_1$, because if we prove the statement for sets
whose sizes only differ by 1, then we can apply it repeatedly to get the
statement for sets which have a larger difference.

We will now rely on Lemma~\ref{lem:msoxp} to construct an XP algorithm for
deciding $\phi(S_1)$ and $\phi(S_2)$. The trivial algorithm we have already
discussed would consider $2^n$ sets every time a set variable has to be
assigned a value and $n$ vertices every time a vertex variable has to be
assigned a value.  However, because of Lemma~\ref{lem:msoxp} we can consider
only $O(2^ln^w)$ different assignments for a set variable. This is because the
equivalence between different sets of the same size established allows us to
sample one set for each combination of sizes that the set will have with each
of the $w$ types (the $2^l$ factor comes from the fact that labeled vetices are
``special'' and we have to decide for each one individually). Note though that
deciding on an assignment of a set can in the worst case double $w$, since we
are adding a new color to the graph representing the set. Thus, for the next
set we would have to consider $O(2^ln^{2w})$ choices and so on.  Furthermore,
from the proof of Lemma~\ref{lem:msoxp} it is straightforward to derive a
slightly stronger version of Lemma~\ref{lem:equiv} which holds for MSO
sentences. Using this we conclude that we need to check through $w+l$ samples
when we are deciding on a vertex variable and this introduces a new label.

Suppose that we use the algorithm sketched above to decide $\phi(S_1)$ and
$\phi(S_2)$. The crucial point now is that this algorithm has a lot of freedom
in picking the sample sets and vertices it considers. In particular, when
assigning value to a vertex variable the algorithm can always avoid the vertex
$u$ if there are still other vertices of the same type. It is not hard to see
that if the algorithm never assigns $u$ to any vertex variable when deciding
$\phi(S_1)$ and $\phi(S_2)$ the result will necessarily be the same for both
sentences. So we need to argue why the algorithm can always avoid using $u$.

To achieve this we can exploit the freedom the algorithm has when picking sets.
Every time the algorithm picks a set to be considered the set of vertices of
the same type as $u$ is partitioned into two sets. Because it does not matter
which vertices are included in a set and only the size of the set's partition
with a type matters, we can make sure that $u$ is always placed in the larger
of the two new types by exchanging with another vertex appropriately. Because
of the restriction on the sizes of $S_1$ and $S_2$ we know that initially $u$
belongs in a type shared by at least $2^{q_S}q_V$ other vertices.  It is not
hard to see that this invariant is maintained by the algorithm when picking a
set if we place $u$ in the larger of the two new types when picking a set and
we pick a different sample from its type when we pick a vertex. Thus, we have
established that there exists an algorithm that decides $\phi(S_1)$ and
$\phi(S_2)$ without ever assigning $u$ to a vertex variable, which means that
the algorithm must decide the same value for both sentences. \qed

\end{proof}

\begin{theorem} \label{thm:msofpt}

There exists an algorithm which, given a graph $G$ with $l$ labels, \nd\ at
most $w$ and an MSO formula $\phi$ with at most $q_S$ set variables and $q_V$
vertex variables, decides if $G\models \phi$ in time $2^{ O\left(2^{q_S} (w+l)
q_S^2 q_V \log q_V \right) }\cdot |\phi|$.

\end{theorem}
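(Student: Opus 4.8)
The plan is to prove Theorem~\ref{thm:msofpt} by induction on the number of set quantifiers $q_S$, peeling off the outermost set quantifier and using Lemma~\ref{lem:msofpt} to bound the number of genuinely distinct choices we must try for it. First I would handle the base case $q_S = 0$: here $\phi$ is a pure FO sentence, so Theorem~\ref{thm:fo} applies directly and gives running time $O((w+l+q_V)^{q_V}\cdot|\phi|)$, which is comfortably within the claimed bound. For the inductive step, assume $\phi = \exists X\, \psi(X)$ (the universal case follows by negation, as in Theorem~\ref{thm:fo}), where $\psi$ has $q_S - 1$ set quantifiers and $q_V$ vertex quantifiers. The key point is that Lemma~\ref{lem:msofpt} tells us that for each type $T$, two candidate sets $S$ that agree outside $T$ and whose intersection sizes $|S\cap T|$ both land in the ``middle'' interval $[2^{q_S}q_V,\ |T| - 2^{q_S}q_V - 1]$ are equivalent with respect to $\psi$. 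Therefore, for each of the $w$ types, instead of trying all possible intersection sizes we need only try the $O(2^{q_S}q_V)$ small sizes at each end of the interval plus a single representative of the entire middle, giving $O(2^{q_S}q_V)$ relevant choices per type; across $w$ types (and with a $2^l$ factor for the labeled vertices, which must be decided individually) this yields roughly $(2^{q_S}q_V)^{w}\cdot 2^l$ distinct sets $S$ to consider for $X$.

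Next I would account for the cost of recursing. Once we fix a choice of $X=S$, we add $S$ as a new color class, which in the worst case doubles the number of types from $w$ to $2w$, and we have used one label, so the inner problem is an instance with $q_S - 1$ set variables, $q_V$ vertex variables, parameter $2w$, and $l$ (or $l+1$) labels. Writing $f(q_S, w, l)$ for the running time, the recurrence is roughly
\begin{equation}
f(q_S, w, l) \le \left(2^{q_S} q_V\right)^{O(w)}\cdot 2^{O(l)}\cdot f(q_S-1,\ 2w,\ l+1),
\end{equation}
with base case $f(0, w, l) = O((w+l+q_V)^{q_V})$. The plan is to unroll this recurrence over the $q_S$ levels. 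The dominant effect is that the ``width'' parameter doubles at each of the $q_S$ levels, so by the time we reach the FO base case the number of types has grown to $2^{q_S}w$. The product of the branching factors telescopes: taking logarithms, each level $i$ contributes $O\!\left(2^{i} w \cdot (q_S \log 2 + \log q_V)\right)$ to the exponent, and summing the geometric series $\sum_{i=0}^{q_S} 2^i$ gives the leading factor $2^{q_S}$, producing an exponent of order $2^{q_S}(w+l)\,q_S\,(q_S + \log q_V)$, while the FO base case contributes $2^{q_S}w\cdot q_V\log(2^{q_S}w + q_V)$. Combining these and simplifying (absorbing the various polynomial-in-$q_S$ and logarithmic factors) should land on the stated exponent $2^{q_S}(w+l)\,q_S^2\,q_V\log q_V$, with the $|\phi|$ factor coming from evaluating each leaf's Boolean combination of predicates.

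I would then verify carefully that the sampling argument is legitimate at every recursive level, which I expect to be the main obstacle. The delicate issue is that Lemma~\ref{lem:msofpt} presupposes the interval $[2^{q_S}q_V,\ |T| - 2^{q_S}q_V - 1]$ is nonempty, i.e.\ that each type is large enough ($|T| \ge 2^{q_S+1}q_V + 1$); for small types this fails and we must instead enumerate all possible intersection sizes, but since such a type has fewer than $2^{q_S+1}q_V$ vertices this adds at most $O(2^{q_S}q_V)$ choices anyway, so the bound is unaffected. A more subtle point, already flagged in the sketch preceding Lemma~\ref{lem:msofpt}, is that the threshold $2^{q_S}q_V$ is tuned precisely so that the slack survives the repeated halving as set variables are assigned: each time we fix a set variable and split a type in two, the ``safe middle'' interval must shrink by at most a factor governed by the remaining budget, and one must confirm that a vertex avoided at an outer level (the $u$ of Lemma~\ref{lem:msofpt}) remains avoidable at all inner levels. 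I would therefore present the induction so that the invariant maintained is exactly the hypothesis of Lemma~\ref{lem:msofpt} for the \emph{current} value of $q_S$, ensuring the equivalence is available uniformly; the routine-but-error-prone bookkeeping of the doubling widths, the geometric summation, and the final simplification into the claimed closed form is the part I would save for the detailed write-up.
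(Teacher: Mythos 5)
Your proposal is correct and follows essentially the same route as the paper: both use Lemma~\ref{lem:msofpt} to restrict each set variable to $O(2^{q_S}q_V)$ intersection sizes per type, account for the doubling of the number of types each time a set is fixed (reaching $2^{q_S}w$ types), and multiply the resulting branching factors together with the $(2^{q_S}w+l+q_V)^{q_V}$ factor for vertex variables. The only difference is presentational — you organize the counting as an explicit recurrence unrolled over the $q_S$ levels, and you flag the small-type edge case, whereas the paper directly multiplies the per-variable upper bounds — but the algorithm and the bound are the same.
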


\begin{proof}

Our algorithm now will rely heavily on Lemma~\ref{lem:msofpt}. When picking an
assignment for a set variable, for each of the $w$ types of vertices we need to
decide on the size of its intersection with the set. Because of
Lemma~\ref{lem:msofpt} we can limit ourselves to considering $2^{q_S+1}q_V$
different sizes for the first set, which gives $(2^{q_S+1}q_V)^w$ choices for
the first set variable. However, because every time we decide on a set we start
working on a graph with one more color, the number of vertex types may at most
double. From these we can derive an easy upper bound on the number of
alternatives we will consider for each set variable as $2^{2^{q_S}w(q_S+1+\log
q_V)}$. Since we have $q_S$ set variables in total this gives
$2^{q_S2^{q_S}w(q_S + 1 +\log q_V)}$. For each vertex variable we have to
consider at most $2^{q_S}w+l+q_V$ alternatives, so for all $q_V$ variables at
most $(2^{q_S}w+l+q_V)^{q_V}$. The product of these two upper bounds is an
upper bound on the total number of alternatives our algorithm will consider,
giving the promised running time. \qed

\end{proof}

\begin{corollary}

There exists an algorithm which, given an MSO sentence $\phi$ with $q$
variables and an uncolored, unlabeled graph $G$ with vertex cover at most $k$,
decides if $G\models \phi$ in time $2^{2^{O(k+q)}}$.

\end{corollary}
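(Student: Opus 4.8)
The plan is to derive the corollary directly from Theorem~\ref{thm:msofpt} by combining it with Lemma~\ref{lem:vcnd}, in exact parallel to how the FO corollary follows from Theorem~\ref{thm:fo}. First I would note that the input graph $G$ is uncolored and unlabeled, so $l=0$ and the only parameters left to bound are the \nd\ $w$ and the quantifier counts $q_S, q_V$. By Lemma~\ref{lem:vcnd}, since $G$ has vertex cover at most $k$, it has \nd\ at most $w \le 2^k + k$, which I would simply record as $w = 2^{O(k)}$. For the logical parameters, the sentence $\phi$ has $q$ variables in total, so crudely $q_S \le q$ and $q_V \le q$; these are the only facts about $\phi$ I need.

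Next I would substitute these bounds into the running time $2^{O(2^{q_S}(w+l)q_S^2 q_V \log q_V)}\cdot|\phi|$ from Theorem~\ref{thm:msofpt}. The exponent becomes $O\!\left(2^{q}\cdot 2^{O(k)}\cdot q^2 \cdot q\log q\right)$. The dominant behavior here is the product of the two genuinely exponential factors $2^{q}$ and $2^{O(k)}$, giving $2^{O(k+q)}$, while the polynomial factor $q^3\log q$ and the $|\phi|$ term are absorbed into the exponent since a polynomial in $q$ is $2^{O(\log q)} = 2^{o(k+q)}$. Thus the whole exponent is $2^{O(k+q)}$ and the running time collapses to $2^{2^{O(k+q)}}$, as claimed. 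The arithmetic is entirely routine: the point is just to check that each factor in the Theorem~\ref{thm:msofpt} bound is either singly exponential in $k+q$ or polynomial in $q$, so that the iterated exponent stays at height two with a linear $O(k+q)$ in the inner exponent.

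I do not expect any real obstacle here, since the corollary is a specialization rather than a new argument. The only point requiring a moment of care is confirming that the $2^{q_S}$ factor multiplying $w$ inside the exponent does not push the bound to triply exponential: because $2^{q_S} \le 2^q$ and $w \le 2^{O(k)}$ are each singly exponential, their product $2^{q_S}\cdot w = 2^{O(k+q)}$ remains singly exponential, so the outer exponentiation leaves us at $2^{2^{O(k+q)}}$ and not higher. Once that is observed, the statement follows immediately.
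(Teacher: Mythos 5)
Your derivation is correct and is exactly the intended route: the paper gives no explicit proof of this corollary, but it follows from Theorem~\ref{thm:msofpt} with $l=0$, $w\le 2^k+k$ via Lemma~\ref{lem:vcnd}, and $q_S,q_V\le q$, precisely as you compute. Your observation that $2^{q_S}\cdot w=2^{O(k+q)}$ keeps the tower at height two is the right (and only) point of care.
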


Again, this gives a dramatic improvement compared to Courcelle's theorem,
though exponentially worse than the case of FO logic. This is an interesting
point to consider because for treewidth there does not seem to be any major
difference between the complexities of model checking FO and MSO logic.

The natural question to ask here is once again, can we do significantly better?
For example, perhaps the most natural question to ask is, is it possible to
solve this problem in $2^{2^{o(k+q)}}$?  As we will see later on, the answer is
no, if we accept some standard complexity assumptions.

\section{Lower Bounds}

In this Section we will prove some lower bound results for the model checking
problems we are dealing with. Our proofs rely on a construction which reduces
SAT to a model checking problem on a graph with small vertex cover. 

For simplicity, we first present our construction for directed graphs. Even
though we have not talked about directed graphs thus far, it is quite immediate
to extend FO and MSO logic to express digraph properties; we just need to
replace the $E()$ predicate, with a non-symmetric predicate for the digraph's
arcs. To avoid confusion we use $D(x,y)$ to denote the predicate which is true
if a digraph has an arc from $x$ to $y$. It is not hard to see that the results
of Theorems~\ref{thm:fo} and \ref{thm:msofpt} easily carry over in this setting
with little modification; we just need to take into account that a digraph of
vertex cover $k$ has $4^k$, rather than $2^k$ categories of vertices. After we
describe our construction for labeled, colored digraphs, we will sketch how it
can be extended to unlabeled, uncolored graphs.

Given a propositional 3-CNF formula $\phi_p$ with $n$ variables and $m$
clauses, we want to construct a digraph $G$ that encodes its structure, while
having a small vertex cover. The main problem is encoding numbers up to $n$
with graphs of small vertex cover. Here, we extend the basic idea of
\cite{FrickG04} where numbers are encoded into directed trees of very small
height, but rather than using a tree we construct a DAG.

We define the graph $N(i)$ inductively:

\begin{itemize}

\item $N(0)$ is just one vertex

\item For $i>0$, $N(i)$ is the graph we obtain from $N(i-1)$ by adding a new
vertex. Let $i_j$ denote the $j$-th bit of the binary representation of $i$,
with the least significant bit numbered 0. Let $H=\{ j\ |\ i_j=1 \}$. Then for
all elements $j\in H$ we add an arc from the new vertex to the vertex which was
first added in the graph $N(j)$.

\end{itemize}

In our construction we will use 6 copies of $N(\log n)$ and refer to them as
$N_i$, $1\le i \le 6$. We will also informally assume in our argument a
numbering for the vertices of each $N_i$, from $0$ to $\log n$, in the order in
which they were added in the inductive construction we described. We will
informally say that each vertex corresponds to a number. (Note that this
numbering is only used in our arguments, we are not assuming that these
vertices are labeled).

The digraph will now consist of the six copies of $N(\log n)$ we mentioned and
two additional sets of vertices:

\begin{itemize}

\item The set $V_1=\{v_1,\ldots,v_n\}$ whose vertices correspond to variables.
For each $v_i\in V_1$ we add an arc to vertex $j$ of the set $N_1$ iff the
$j$-th bit of the binary representation of $i$ is 1.

\item The set $M=\{u_1,\ldots,u_m\}$ whose vertices correspond to clauses. For
the vertex $u_i$ which corresponds to a clause with three literals,
$l_1,l_2,l_3$. If $l_1$ is a positive literal, we add arcs from $u_i$ to
vertices of $N_1$ which correspond to bits of the binary representation of the
variable of $l_1$. If it is a negative literal, we add the same arcs but to
vertices of $N_2$. Similarly, if $l_2$ is positive, we add arcs from $u_i$ to
vertices in $N_3$, otherwise to $N_4$, and for $l_3$ to $N_5$ and $N_6$.

\end{itemize}

To complete the construction of the digraph $G$ we need just to specify the
labels and colors used. The label set will be empty, while the color set will
simply be $\mathcal{C}=\{N_1,N_2,N_3,N_4,N_5,N_6,V_1,M\}$.

We now need to define a formula $\phi$, such that $G\models \phi$ iff $\phi_p$
is satisfiable. First, we need a way to compare the numbers represented by
different vertices of $G$. We inductively define a formula $eq_h(x,y,C_1,C_2)$.
Informally, its meaning will be to compare the numbers represented by two
vertices $x$ and $y$ by checking out-neighbors of $x$ in color class $C_1$ and
out-neighbors of $y$ in color class $C_2$. The main concept of $eq_h$ is
similar as that of the construction in Chapter 10.3 of
\cite{flum2006parameterized}, but in our case it is necessary to complicate the
construction by adding the color classes because this will allow us to
independently check the number represented by each of the three literals in a
clause. First, we set $eq_0(x,y,C_1,C_2) = \forall z z$, that is we set $eq_0$
to be trivially true. Now assuming that $eq_h$ is defined as a first attempt we
could set that

\begin{eqnarray*} 
eq_{h+1}(x,y,C_1,C_2) &=& \forall w \Big( (D(x,w)\land w\in
C_1) \to \\
                      & &            \exists z( D(y,z) \land  y\in C_2 \land eq_h(w,z,C_1,C_2)) \Big) \land \\
			&& \forall z' \Big( (D(y,z')\land z'\in
C_2) \to \\
                      & &            \exists w'( D(x,w') \land  w'\in C_1 \land eq_h(w',z',C_1,C_2)) \Big)  \\
\end{eqnarray*}

However, this would make our formula too large, because then $eq_h$ would grow
exponentially in $h$. We will follow the trick presented in
\cite{flum2006parameterized}, where it is observed that the above formula is
equivalent to 

\begin{eqnarray*} 
eq_{h+1}(x,y,C_1,C_2) =& \\
	& &\Big( (\exists w D(x,w) \land w\in C_1) \leftrightarrow (\exists z D(y,z) \land z\in C_2) \Big) \land\\
	& &\forall w ( (D(x,w)\land w\in C_1) \to \\
	& &\exists z ( (D(y,z)\land z\in C_2) \land \\
	& &\forall z'( (D(y,z')\land z'\in C_2) \to \\
	& &\exists w'( (D(x,w')\land w'\in C_1) \land \\
	& &eq_h(w,z,C_1,C_2) \land eq_h(w',z',C_1,C_2) ))))
\end{eqnarray*}

Though this definition would still make $eq_h$ have size exponential in $h$, we
can now see that $eq_h(w,z,C_1,C_2) \land eq_h(w',z',C_1,C_2)$ is equivalent to 

$$ \forall u \forall v \Big( ((u=w\land v=z) \lor (u=w'\land v=z'))\to
eq_h(u,v,C_1,C_2)\Big) $$

Using this last trick, it is not hard to show with a simple induction that the
size of $eq_h$ is $O(h)$.

Let $tow(h)$ be the function inductively defined as $tow(0)=0$ and
$tow(h+1)=2^{tow(h)}$. From now on we will use $h_n$ to denote the minimum $h$
such that $tow(h)\ge n$ (that is, $h_n= \log^*n$).

The formula $\phi$ we construct will be

\begin{eqnarray*}
 \exists S & \forall x ( x\in M \to \\
           & \exists y ( y\in V_1 \land \\
	   & &\Big( (y\in S \land ( \bigvee_{i\in\{1,3,5\}} eq_{h_n}(x,y,N_i,N_1))) \lor \\
	   &       &( y\not\in S \land (\bigvee_{i\in\{2,4,6\}} eq_{h_n}(x,y,N_i,N_1))) \Big) ))
\end{eqnarray*}

We can now establish the following facts:

\begin{lemma}

$G\models \phi$ iff $\phi_p$ is satisfiable. Furthermore, $\phi$ has size
$O(\log^*n)$, using 1 set quantifier and $O(\log^*n)$ vertex quantifiers and
$G$ has a vertex cover of size $O(\log n)$.

\end{lemma}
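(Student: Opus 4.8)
The plan is to establish three separate claims that together make up the Lemma: (1) the correctness equivalence $G\models\phi$ iff $\phi_p$ is satisfiable, (2) the size bound $|\phi|=O(\log^* n)$ together with the quantifier count, and (3) the vertex-cover bound $O(\log n)$ on $G$. The size and vertex-cover bounds are essentially bookkeeping, so I would dispose of them first and spend the bulk of the effort on correctness.

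\medskip

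For the vertex-cover bound, I would argue that the six copies $N_i$ of $N(\log n)$ each have only $\log n + 1$ vertices, so the set $\bigcup_i N_i$ has size $O(\log n)$. I claim this union is a vertex cover: every arc in $G$ either lies inside some $N_i$ (both endpoints in the cover), or goes from a vertex of $V_1$ or $M$ \emph{into} some $N_i$ (one endpoint in the cover), and by construction $V_1$ and $M$ are independent sets with no arcs among themselves or between each other. Hence removing $\bigcup_i N_i$ removes every arc, giving vertex cover $O(\log n)$. For the size bound, the key input is the already-established fact that $|eq_h|=O(h)$; since $\phi$ uses $eq_{h_n}$ with $h_n=\log^* n$ a constant number of times, each occurrence has size $O(\log^* n)$, and the surrounding quantifier structure adds only $O(1)$, so $|\phi|=O(\log^* n)$. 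Counting quantifiers: the outer structure contributes the single $\exists S$ plus $O(1)$ vertex quantifiers, while each $eq_{h_n}$ contributes $O(\log^* n)$ vertex quantifiers through its recursive unfolding, for $O(\log^* n)$ vertex quantifiers total.

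\medskip

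The heart of the argument, and the main obstacle, is correctness. The crucial sub-claim I must prove is that $eq_{h_n}(x,y,C_1,C_2)$ correctly tests whether the number encoded by $x$ via its $C_1$-out-neighbors equals the number encoded by $y$ via its $C_2$-out-neighbors. I would prove this by induction on $h$: the inductive formula asserts that $x$ has a $C_1$-out-neighbor iff $y$ has a $C_2$-out-neighbor, and that the multisets of numbers reached by one more step through $C_1$ (from $x$) and through $C_2$ (from $y$) coincide. Because the inductive encoding $N(i)$ makes the out-neighbors of a vertex point to the ``first vertices'' of the $N(j)$ for $j$ in the bit-set of $i$, equality of these neighbor-sets at every level is exactly equality of binary representations, and the tower growth $tow(h)$ guarantees that $h_n=\log^* n$ levels suffice to distinguish all numbers up to $n$. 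Once this is in hand, I would interpret $\phi$ semantically: the guessed set $S\subseteq V_1$ is a truth assignment (a variable $v_i$ is set true iff $v_i\in S$), and the body says every clause-vertex $x\in M$ has a witness variable-vertex $y\in V_1$ that, matched through the appropriate $N_i$ via $eq_{h_n}$, identifies a literal of that clause which $S$ satisfies---a positive literal when $y\in S$ (checked through $N_1,N_3,N_5$) or a negative literal when $y\notin S$ (through $N_2,N_4,N_6$). Thus $G\models\phi$ exactly when some assignment satisfies every clause, i.e. when $\phi_p$ is satisfiable.

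\medskip

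The step I expect to be genuinely delicate is verifying that the color-parametrized $eq_h$ really does independently isolate the three literals of a clause: each clause-vertex $u_i$ has out-arcs into three distinct pairs of $N$-copies (one pair per literal, chosen by polarity), so I must check that $eq_{h_n}(x,y,N_i,N_1)$ reads only the $N_i$-out-neighbors of $x$ and compares them against the $N_1$-out-neighbors of $y$, without interference from $x$'s arcs into the other five copies. This is exactly why the color classes $C_1,C_2$ were threaded through the definition of $eq_h$, and I would confirm that the $w\in C_1$, $z\in C_2$ guards in the inductive step restrict every quantifier to the intended copy at every level of the recursion. With that confirmed, the semantic reading of $\phi$ above goes through and the equivalence follows.
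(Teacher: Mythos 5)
Your proposal is correct and follows essentially the same route as the paper: the paper likewise treats the size, quantifier-count, and vertex-cover bounds as immediate and devotes its proof entirely to the induction on $h$ showing that $eq_h$ compares numbers correctly for the first $tow(h)$ vertices of the $N_i$ (so that $eq_{h_n-1}$ covers all numbers up to $\log n$), which is exactly your central sub-claim. Your additional explicit verification of the vertex cover and of the role of the color classes in isolating the three literals is consistent with, and slightly more detailed than, what the paper writes out.
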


\begin{proof}

The only non-trivial part to verify is that $eq_{h_n}(x,y,N_i,N_1)$ works as
expected, that is, it will be true iff $y$ does indeed
correspond to a variable which appears in the clause which corresponds to $x$.
To prove this it suffices to prove that $eq_{(h_n-1)}(x,y,N_i,N_j)$ works
correctly for any two vertices $x\in N_i$ and $y\in N_j$, meaning that it is
true iff $x$ and $y$ correspond to the same number. We will show this by induction
on $h$.  Specifically, we will show that for all $h$, $eq_h$ works correctly
for the first $tow(h)$ vertices of the sets $N_i$.  This will imply that
$eq_{(h_n-1)}$ works correctly for the first $tow(h_n-1)\ge \log n$ vertices of
the sets $N_i$, that is, for the whole sets.

The base case is that $eq_0(x,y,N_i,N_j)$ works correctly for the vertices of
$N_i$ and $N_j$ corresponding to 0. In this case $eq_0$ is of course always
true, which makes the base case trivial.

Suppose that we have established the inductive hypothesis up to some $h$, that
is, we know that $eq_h(x,y,N_i,N_j)$ is true iff $x$ and $y$ correspond to the
same number, assuming that this number is at most $tow(h)$. It is not hard to
see that using this we can establish the correctness of $eq_{h+1}$ for all
vertices up to $2^{tow(h)}$, because these vertices only have out-neighbors
corresponding to numbers up to $tow(h)$.  \qed 

\end{proof}

Let us now describe how our construction can be extended to undirected graphs.

\begin{lemma}

Let $G$ and $\phi$ be as in the construction above. Then there exists an
uncolored, unlabeled graph $G'$ and an MSO sentence $\phi'$ such that $G\models
\phi$ iff $G'\models \phi'$. Furthermore, the vertex cover of $G'$ is $O(\log
n)$ and $\phi'$ has $O((\log^*n)^2)$ vertex variables and one set variable.

\end{lemma}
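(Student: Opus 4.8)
The plan is to transform $G$ and $\phi$ in two essentially independent stages---first eliminating the colors, then the arc directions---while keeping every newly added vertex either inside the (already $O(\log n)$) vertex cover or adjacent only to cover vertices, so that the vertex cover of the final graph $G'$ stays $O(\log n)$. Since the construction uses no labels ($L=\emptyset$), there is nothing to do for labels. Throughout, I will add only a constant number of distinguished ``gadget'' vertices, each of which can be pinned down inside MSO by attaching to it a small, pairwise-distinct pendant structure; each such vertex is then identifiable by a fixed-size subformula, and I will pre-quantify all of them at the front of $\phi'$.

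For the colors, note that the eight classes $N_1,\dots,N_6,V_1,M$ partition $V(G)$. I introduce eight marker vertices $c_1,\dots,c_8$, make $c_j$ adjacent to exactly the vertices of the $j$-th class, and make the $c_j$ mutually distinguishable by small attached gadgets. Each marker is placed in the cover; since the vertices of $V_1$ and $M$ were independent and are now joined only to their marker (which is in the cover), they stay independent, so the cover grows by only $O(1)$. I then replace every atom ``$x\in C_j$'' of $\phi$ by ``$E(x,c_j)$'', which costs $O(1)$ additional quantifiers in total.

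The removal of arc directions is the crux. The key observation is that directions are only genuinely ambiguous \emph{inside} the sets $N_i$: the vertices of $V_1$ and $M$ are pure sources whose arcs always point into differently-colored $N$-vertices, so once colors are encoded by the markers the direction of every such arc is recoverable. It therefore suffices to orient the arcs internal to the $N_i$, and crucially all $N_i$-vertices lie in the $O(\log n)$-size cover. I encode these directions locally---for instance by splitting each $N_i$-vertex $v$ into an ``in-port'' and an ``out-port'' joined by an association edge, representing each internal arc $u\to v$ by the undirected edge between $u$'s out-port and $v$'s in-port, and marking the two port kinds by two further distinguished vertices. All the new port and gadget vertices are adjacent only to $N_i$-vertices (which are in the cover), so they may be placed in the cover or in the independent set without pushing it above $O(\log n)$. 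I then represent each $N$-vertex in $\phi'$ by its in-port and rewrite each atom $D(x,y)$ by the corresponding fixed-size subformula that locates the relevant out-port and tests adjacency, guarding every quantifier so that it ranges only over genuine vertices.

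The main obstacle is exactly this direction encoding under the $O(\log n)$ vertex-cover budget: it forces the asymmetric treatment above (the large source sets left untouched, only the small $N_i$ gadgetized), and one must verify that the rewritten $D$-atoms describe the same DAG, so that the inductive correctness of $eq_h$ established in the previous lemma carries over verbatim. For the quantifier count, the single set quantifier $\exists S$ is never duplicated, so $\phi'$ still uses one set variable; each of the $O(\log^*n)$ vertex quantifiers and atoms of $\phi$ is replaced by a block of bounded size, and the delicate part of the accounting is to check that the size-$O(\log^*n)$ recursive definition of $eq_h$ survives the substitution (in particular that no step duplicates the single embedded copy of $eq_h$), whence the cumulative overhead stays within the claimed $O((\log^*n)^2)$ vertex variables.
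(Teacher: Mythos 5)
Your overall architecture (markers for the color classes, pendant gadgets to pin down the constantly many distinguished vertices, and the observation that only the arcs internal to the $N_i$ have genuinely ambiguous direction) is sound, and the vertex-cover and quantifier accounting would go through. The gap is in the central step, the port-splitting encoding of the internal arc directions. After the split, the in-port of a vertex $v$ is adjacent both to $v$'s own out-port (the association edge) and to the out-port of every $u$ with an arc $u\to v$ (the arc-encoding edges). These two kinds of edges are structurally indistinguishable --- each joins an in-port to an out-port --- so the subformula that is supposed to ``locate the relevant out-port'' of $x$ has nothing to latch onto. Falling back on the naive test ``there exists an out-port $q$ adjacent to both $x$ and $y$'' produces false positives: taking $q$ to be the out-port of $v_y$ itself accepts whenever $v_y\to v_x$, and taking $q$ to be the out-port of a common in-neighbour accepts whenever $v_x$ and $v_y$ have one (e.g.\ vertices $0$ and $1$ of any $N_i$, which both receive an arc from vertex $3$). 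As written, the rewritten $D$-atoms therefore do not describe the original DAG, and the inductive correctness of $eq_h$ does not carry over. The gap is repairable --- for instance, give each port pair a private coupling vertex adjacent to exactly its two ports, so that ``$q$ is the out-port of $x$'' becomes ``$x$ and $q$ have a common neighbour of the coupling kind'' --- but some such mechanism has to be supplied and checked against the other gadgets (in particular against the pendant structures used to identify the markers).

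For contrast, the paper avoids local direction gadgets entirely: it uses that $G$ is a DAG of depth $O(\log^* n)$, records for each vertex $u$ its longest-path-to-a-sink level $l(u)$ by an edge to position $l(u)$ of an auxiliary path of length $\log^* n$, and replaces $D(x,y)$ by $E(x,y)\wedge \psi(x,y)$, where $\psi$ expresses $l(x)>l(y)$ by comparing (for containment) the two sub-paths of the auxiliary path below the level vertices of $x$ and $y$. Writing out those sub-paths with individual vertex quantifiers costs $O(\log^* n)$ variables per occurrence of $D$, which is exactly where the stated $O((\log^* n)^2)$ bound comes from; your scheme, once the association issue is fixed, would achieve the bound with only constant overhead per atom.
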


\begin{proof}

First, let us describe how to make the graph undirected. Observe that $G(V,A)$
is a DAG. We define for every vertex $v$ of $G$ the value $l(v) = 1+
\max_{(v,u)\in A} l(u)$ if $v$ is not a sink and $l(v)=1$ if it is.
Informally, $l(v)$ is the order of the longest path that can be constructed
from $v$ to a sink.  Note that the maximum $l(v)$ in $G$ is $\log^*n$.

Add to $G$ a directed path on $\log^*n$ vertices and number the vertices of the
path $1,2,\ldots,\log^* n$, starting from the sink. Now, from every vertex $u$
of $G$ add an arc to the vertex $l(u)$ of the path. Add a new color class $P$
to the graph, which includes the vertices of the path. Also add a label, $l_s$
identified with vertex $1$ of the path.

Now, we can remove the directions of the arcs of $G$ to obtain an  undirected
graph. In order to retain the proper meaning of $\phi$ in the new graph we must
replace all $D(x,y)$ predicates with $E(x,y) \land \psi(x,y)$, where
$\psi(x,y)$ will be a formula whose informal meaning is that $l(x)>l(y)$.  This
can be expressed using the path we added. 

First, we construct the formula 

\begin{eqnarray*} 
P(x,S) &=& (x\in P) \land \\
       & & \Big( \exists S\ ( \forall y (y\in S \to y\in P)) \land (x\in S) \land (l_s \in S) \\
       & &  \land (\forall y\  y\in S \to  \\
       & &\ \ (\exists z_1 \exists z_2\  (z_1\in S) \land (z_2\in S) \land E(y,z_1) \land E(y,z_2) \land
z_1\neq z_2))\Big)
\end{eqnarray*}

Informally, this formula is true iff $x$ is a vertex of $P$ and $S$ a set of
vertices of $P$ that induce a path from $x$ to $l_s$. Note that, we could also
express $P(x,S)$ with FO logic, if we use an extra $O(\log^* n)$ variables,
since the size of $S$ is upper-bounded by $O(\log^* n)$. So using this bound we
can simply consider $\exists S$ to be shorthand for $O(\log^* n)$ existential
quantifiers. In the remainder we will use the set notation, with the
understanding that it can be thus eliminated if we so desire.

Now, we are ready to define $\psi(x,y)$

\begin{eqnarray*}
\psi(x,y)&=& \exists x'\ \exists y'\ E(x,x') \land E(y,y') \land (\exists S_x
\exists S_y \\
         & &\ \ \ P(x',S_x) \land P(y',S_y) \land (\forall z\ z\in S_y\to z\in
S_x))
\end{eqnarray*}

The intuition behind our construction is that the direction of the arcs of a
DAG can be recovered from the underlying undirected graph if we remember for
every vertex its maximum distance from a sink. We achieve this by connecting
every vertex to an appropriate vertex of an auxilliary path $P$, in a way
``projecting'' paths from the DAG to $P$. Now comparison between two paths can
be performed in our logic simply by checking the projected paths on $P$, since
one must be a subset of the other. 

Thus, we have constructed an undirected graph, which uses one label and a
constant number of colors, and a formula which we can model check on this new
graph. Note that the new formula is not much larger than the old one: we have
replaced all of the $O(\log^* n)$ occurences of the $D()$ predicate with a
formula of constant size for MSO logic, or size $O(\log^* n)$ for FO logic, if
we replace the sets as described previously.

Now, the last step is showing how to get rid of colors and labels. First,
eliminating colors is straightforward if we are willing to add a few additional
labels to our graph. Add one labelled vertex for each color class and connect
it with all the vertices belonging in that class. Now, the $\in$ predicate can
be replaced with a check for a connection to the labelled vertex of the color
class. 

Finally, to eliminate labels, it suffices to notice that our graph has no
leaves. Thus, attaching a leaf to a vertex is enough to make it special, and
checking if a vertex has a leaf attached to it can be performed by a constant
size FO formula. Because we need $O(1)$ (specifically, 10) labels, we attach a
different number of leaves to each vertex which would be labelled. We can now
add $O(1)$ variables to our formula and force each to be identified with each
vertex we need labeled, without increasing the size of the formula by more than
a constant.

In the end we have a graph with vertex cover still $O(\log n)$, and a formula
with 1 set variable and $O((\log^* n)^2)$ vertex variables. Our graph is
unlabeled and uncolored. \qed

\end{proof}

\begin{theorem}

Let $\phi$ be a MSO formula with $q_v$ vertex quantifiers, $q_S$ set
quantifiers and $G$ a graph with vertex cover $k$. Then, unless P=NP, there is
no algorithm which decides if $G\models \phi$ in time $O(2^{O(k+q_S+q_V)}\cdot
poly(n))$. Unless NP$\subseteq$ DTIME($n^{poly\log(n)}$), there is no algorithm
which decides if $G\models \phi$ in time $O(2^{poly(k+q_S+q_V)}\cdot poly(n)$.
Finally, unless 3-SAT can be solved in time $2^{o(n)}$, there is no algorithm
which decides if $G\models \phi$ in time $O(2^{2^{o(k+q_S+q_V)}}\cdot
poly(n))$.

\end{theorem}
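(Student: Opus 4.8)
The plan is to combine the two lemmas immediately above: together they turn any propositional 3-CNF formula $\phi_p$ with $n$ variables and $m\le O(n^3)$ clauses into an uncolored, unlabeled graph $G'$ and an MSO sentence $\phi'$ with $G'\models\phi'$ iff $\phi_p$ is satisfiable. The reduction is computable in time polynomial in $n$, and $G'$ has $\mathrm{poly}(n)$ vertices, so the ``$\mathrm{poly}(n)$'' factor appearing in the running times of the theorem is also polynomial in the number of 3-SAT variables. The essential point is the parameter bookkeeping supplied by the two lemmas: $G'$ has vertex cover $k=O(\log n)$, while $\phi'$ uses $q_S=1$ set quantifier and $q_V=O((\log^* n)^2)$ vertex quantifiers. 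Since $(\log^* n)^2$ is dominated by $\log n$, this yields the single relation that drives every part of the proof,
\[ k+q_S+q_V = O(\log n). \]
Moreover this sum tends to infinity with $n$, which will matter for interpreting the $o(\cdot)$ bounds.

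Each of the three claims then follows by substituting this relation into the hypothetical running time and checking that the result collapses to something forbidden by the stated assumption. First, a model-checking algorithm running in $2^{O(k+q_S+q_V)}\cdot\mathrm{poly}(n)$ would run in $2^{O(\log n)}\cdot\mathrm{poly}(n)=\mathrm{poly}(n)$, deciding satisfiability of $\phi_p$ in polynomial time and forcing P $=$ NP. Second, an algorithm running in $2^{\mathrm{poly}(k+q_S+q_V)}\cdot\mathrm{poly}(n)$ would run in $2^{\mathrm{poly}(\log n)}\cdot\mathrm{poly}(n)=n^{\mathrm{poly}\log n}$, placing 3-SAT, and hence NP, in quasi-polynomial time DTIME($n^{\mathrm{poly}\log n}$). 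Third, an algorithm running in $2^{2^{o(k+q_S+q_V)}}\cdot\mathrm{poly}(n)$ would, because $k+q_S+q_V=O(\log n)$, run in $2^{2^{o(\log n)}}\cdot\mathrm{poly}(n)$; here $2^{o(\log n)}=n^{o(1)}=o(n)$, so the time is $2^{o(n)}\cdot\mathrm{poly}(n)=2^{o(n)}$, contradicting the Exponential Time Hypothesis.

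The only genuinely delicate step is the third one, and I would spell it out carefully. Two things must be verified there: that the innermost $o(k+q_S+q_V)$, evaluated on our family of instances where the parameter equals $\Theta(\log n)$, is indeed $o(\log n)$ (any sublinear function of a quantity bounded by $O(\log n)$ is sublinear in $\log n$), and that the arithmetic $2^{2^{o(\log n)}}=2^{n^{o(1)}}=2^{o(n)}$ is sound, using $n^{o(1)}=o(n)$. Finally one must confirm that the leftover $\mathrm{poly}(n)$ factor is absorbed, i.e.\ $2^{o(n)}\cdot\mathrm{poly}(n)=2^{o(n)}$, which is immediate since $\mathrm{poly}(n)=2^{O(\log n)}$. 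The first two claims are comparatively routine once the parameter relation $k+q_S+q_V=O(\log n)$ is in hand; the main care is simply to ensure that $\mathrm{poly}(n)$ measured in graph size and $\mathrm{poly}(n)$ measured in 3-SAT variables agree up to polynomial factors, which they do because $G'$ has $\mathrm{poly}(n)$ vertices.
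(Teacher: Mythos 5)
Your proposal is correct and follows essentially the same route as the paper: invoke the reduction's parameter bounds to get $k+q_S+q_V=O(\log n)$, then substitute into each hypothetical running time to contradict P$\neq$NP, NP$\not\subseteq$DTIME($n^{\mathrm{poly}\log n}$), and the ETH respectively. Your treatment is in fact somewhat more careful than the paper's, which states the three substitutions without spelling out the $2^{2^{o(\log n)}}=2^{n^{o(1)}}=2^{o(n)}$ arithmetic or the clause-count bookkeeping.
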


\begin{proof}

We have already observed that the construction we described has $k=O(\log n)$,
$q_S=1$ and $q_V=O(poly(\log^* n))$, so $k+q_S+q_V = O(\log n)$. Since the
construction can clearly be performed in polynomial time, if we had an
algorithm to decide if $G\models \phi$ in time $2^{O(k+q_S+q_V)}\cdot poly(n)$
this would imply a polynomial time algorithm for 3-SAT. If we had an algorithm
for the same problem running in time $O(2^{poly(k+q_S+q_V)}\cdot poly(n))$ this
would imply an algorithm for SAT with running time $2^{poly\log(n)}$. Finally,
an algorithm running in time $O(2^{2^{o(k+q_S+q_V)}}\cdot poly(n))$ would imply
an algorithm for SAT running in $2^{o(n)}\cdot poly(n)$. \qed

\end{proof}

\begin{theorem} 

Let $\phi$ be a FO formula with $q_v$ vertex quantifiers and $G$ a graph with
vertex cover $k$. Then, unless FPT=W[1], there is no algorithm which decides if
$G\models \phi$ in time $O(2^{O(k+q_V)}\cdot poly(n))$.

\end{theorem}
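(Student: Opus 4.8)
The plan is to prove this by a parameterized reduction from $\kappa$-Clique, which is $W[1]$-complete, reusing essentially all of the number-encoding machinery already developed for the lower bounds above. The key new observation is that, whereas encoding satisfiability required a set quantifier to guess a truth assignment (which is what forced the previous construction into MSO), encoding a clique requires only the ability to \emph{select} $\kappa$ vertices, and this can be done with $\kappa$ ordinary existential vertex quantifiers. Hence the whole reduction stays inside FO, which is exactly the setting the statement concerns, and the role played before by P$=$NP / ETH is now played by FPT$=$W[1].

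Given an instance $(H,\kappa)$ of Clique with $|V(H)|=n$, I would build the same kind of digraph as in the construction above: a set $V_1$ of $n$ vertices, one per vertex of $H$, each attached through a copy of $N(\log n)$ encoding its index, together with one ``edge vertex'' per edge of $H$, each encoding its two endpoints by pointing through further copies of the $N$ gadget into the appropriate index gadgets. As before, $V_1$ and the edge vertices all lie in the independent part, so the vertex cover of the resulting graph is governed only by the copies of $N(\log n)$ and is $O(\log n)$. The FO sentence $\phi$ would existentially guess $\kappa$ vertices $x_1,\dots,x_\kappa$ of $V_1$ and assert, for every pair $i<j$, that some edge vertex encodes exactly the endpoints $x_i$ and $x_j$, the required comparisons of encoded numbers being performed by the already-constructed formula $eq_{h_n}$. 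Since each comparison costs $O(\log^* n)$ quantifiers and there are $O(\kappa^2)$ pairs, the sentence uses $q_V=O(\kappa^2\log^* n)$ vertex quantifiers and \emph{no} set quantifier, and one checks that $G\models\phi$ iff $H$ has a $\kappa$-clique. I would then invoke the undirected-graph lemma above to turn $(G,\phi)$ into an uncolored, unlabeled instance, preserving vertex cover $O(\log n)$, keeping the sentence first order, and multiplying the quantifier count by at most a $poly(\log^* n)$ factor, so that $q_V=O(\kappa^2(\log^* n)^2)$.

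Putting the parameters together, the reduction runs in polynomial time and yields $k=O(\log n)$ and $q_V=O(\kappa^2(\log^* n)^2)$, so $k+q_V$ depends on $n$ only through the extremely slowly growing $\log^* n$. A hypothetical algorithm of running time $2^{O(k+q_V)}\cdot poly(n)$ would then run in $2^{O(\log n)}\cdot 2^{O(\kappa^2(\log^* n)^2)}\cdot poly(n)$; the first factor is absorbed into $poly(n)$, and the main obstacle is arguing that the middle factor is an FPT function of $\kappa$ despite its residual $n$-dependence. This is exactly where a standard win--win bound is used: since $\log^* n\le\log\log n$ we have $2^{O(\kappa^2(\log^* n)^2)}\le(\log n)^{O(\kappa^2)}$, and the elementary inequality $(\log n)^{t}\le 2^{t^2}+n$ (taking $t=O(\kappa^2)$) shows this is bounded by $g(\kappa)\cdot poly(n)$ for a suitable $g$. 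Such an algorithm would place $\kappa$-Clique in FPT and hence force FPT$=$W[1]. The only genuinely delicate points are verifying that pairwise adjacency can be tested with no set quantifier so that $\phi$ stays first order, and confirming that the $\log^* n$ factors fold into the FPT part rather than the $poly(n)$ part, which is precisely what the win--win inequality guarantees.
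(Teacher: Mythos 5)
Your proof is correct, but it takes a genuinely different route from the paper's. The paper keeps the SAT construction of the preceding section completely unchanged and simply switches the source problem to Weighted 3-SAT (W[1]-hard parameterized by the weight $w$): the single set quantifier $\exists S$ is replaced by $w$ existential vertex quantifiers $\exists x_1\ldots\exists x_w$ asserting distinctness, and each membership test $x\in S$ by the disjunction $\bigvee_i x=x_i$. This yields $q_V=w+O(poly(\log^* n))$, so the parameter enters only \emph{additively} and the conclusion $2^{O(k+q_V)}\cdot poly(n)=2^{O(w)}\cdot poly(n)$ is immediate, with no need for any win--win argument. You instead reduce from $\kappa$-Clique, which forces you to design a new graph (edge vertices encoding pairs of endpoints in place of clause vertices encoding literals) and to re-verify the correctness of the $eq_{h_n}$-based encoding for that graph; you also pay $O(\kappa^2)$ pairwise adjacency tests, each costing $O((\log^* n)^2)$ quantifiers after the undirected conversion, so your parameter appears \emph{multiplied} by $(\log^* n)^2$ and you genuinely need the $(\log n)^t\le 2^{t^2}+n$ splitting step to fold the residual $n$-dependence into the FPT part. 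Both arguments are sound (your inequality does hold for large $n$, via the case split $t\gtrless\log\log n$, and $2^{O((\log^* n)^2)}\le\log n$ eventually); the paper's version buys economy --- one construction serves all the lower bounds and the parameter accounting is trivial --- while yours buys a reduction from the more standard W[1]-complete problem at the cost of extra construction and a slightly more delicate running-time analysis. One small point to tighten if you write this up: your formula must handle the two possible orderings in which an edge vertex stores its endpoints across the two $N$-copies (a disjunction over the two assignments, or a convention such as ``smaller index first''), and the distinctness constraints $\bigwedge_{i<j}x_i\neq x_j$ should appear explicitly so that the $\kappa$ guessed vertices really form a $\kappa$-clique.
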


\begin{proof}

We use the same construction, but begin our reduction from Weighted 3-SAT,
a well-known W[1]-hard parameterized problem. Suppose we are given a 3-CNF
formula and a number $w$  and we are asked if the formula can be satisfied by
setting exactly $w$ of its variables to true. The formula $\phi$ we construct
is exactly the same, except that we replace the $\exists S$ with $\exists x_1
\exists x_2 \ldots \exists x_w (\bigwedge_{1\le i< j\le w} x_i \neq x_j)$ and
all occurences of $x\in S$ with $\bigvee_{1\le i\le w} x=x_i$.  It is not hard
to see that the informal meaning of $\phi$ now is to ask whether there exists a
set of exactly $w$ distinct variables such that setting them to true makes the
formula true.

We now have $q_V=w+ O(poly(\log^*n))$ so an algorithm running in time
$2^{O(k+q_V)}\cdot poly(n)$ would imply an algorithm for Weighted 3-SAT running
in $2^{O(w)}\cdot poly(n)$, and thus that FPT=W[1]. \qed

\end{proof}

\section{Neighborhood Diversity} \label{sec:nd}

In this Section we give some general results on the new graph parameter we have defined, neighborhood diversity.
We will use $nd(G),tw(G),cw(G)$ and $vc(G)$ to denote the neighborhood diversity, treewidth, cliquewidth and
minimum vertex cover of a graph $G$. We will call a partition of the vertex set of a graph $G$ into $w$ sets such
that all vertices in every set share the same type a neighborhood partition of width $w$.

First, some general results

\begin{theorem} \label{thm:ndgeneral}

\begin{enumerate}

\item Let $V_1,V_2,\ldots, V_w$ be a neighborhood partition of the vertices of a graph $G(V,E)$. Then each $V_i$
induces either a clique or an independent set. Furthermore,  for all $i,j$ the graph either includes all possible
edges from $V_i$ to $V_j$ or none.

\item For every graph $G$ we have $nd(G) \le 2^{vc(G)}+vc(G)$ and $cw(G)\le nd(G) +1$. Furthermore, there exist graphs
of constant treewidth and unbounded neighborhood diversity and vice-versa.

\item There exists an algorithm which runs in polynomial time and given a graph $G(V,E)$ finds a neighborhood partition
of the graph with minimum width.

\end{enumerate}

\end{theorem}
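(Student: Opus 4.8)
The plan is to handle the three parts in order, since they are essentially independent.

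For part 1 I would argue directly from the definition of having the same type. To see that each $V_i$ induces a clique or an independent set, suppose for contradiction that some $u,v,w\in V_i$ have $(u,v)\in E$ but $(u,w)\notin E$. Since $u,w$ share a type and are non-adjacent we have $w\notin N(u)$ and $u\notin N(w)$, hence $N(u)=N(w)$; therefore $v\in N(u)=N(w)$ and $(v,w)\in E$. But now $u\in N(v)$ with $u\neq w$, so $u\in N(v)\setminus\{w\}=N(w)\setminus\{v\}$, giving $(u,w)\in E$, a contradiction. For the between-blocks claim, take $u,u'\in V_i$ and $v\in V_j$ with $i\neq j$ and $(u,v)\in E$. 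Since $v\notin V_i$ we have $v\neq u'$, so $v\in N(u)\setminus\{u'\}=N(u')\setminus\{u\}$ and thus $(u',v)\in E$; hence every vertex of $V_i$ is adjacent to $v$, and the symmetric argument on the $V_j$ side yields the complete/empty dichotomy.

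For part 2, the inequality $nd(G)\le 2^{vc(G)}+vc(G)$ is exactly Lemma~\ref{lem:vcnd}. For $cw(G)\le nd(G)+1$ I would build a clique-width expression using $w+1$ labels, where $w=nd(G)$: reserve a label for each block $V_i$ together with one shared scratch label. Using part 1, each block is a clique or an independent set; an independent set is produced by introducing its vertices with label $i$, while a clique is built incrementally by the standard trick of introducing each new vertex with the scratch label, joining it to the already-placed label-$i$ vertices by a single edge-introduction operation, and relabelling the scratch label back to $i$, which never uses more than the two labels. After all blocks are built we take their disjoint union and, again by part 1, for every pair $(i,j)$ of completely joined blocks we apply one edge-introduction operation between labels $i$ and $j$; this uses $w+1$ labels in total. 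For the separations, $K_n$ has $nd(K_n)=1$ but $tw(K_n)=n-1$, whereas a path on $n\ge 4$ vertices has $tw=1$ but, since no two of its vertices share a type, $nd=n$.

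For part 3 the key structural fact I would establish is that ``having the same type'' is an \emph{equivalence} relation, from which uniqueness and minimality follow. Reflexivity and symmetry are immediate, so transitivity is the only content. I would split the relation into the adjacent case (\emph{true twins}, $N[u]=N[v]$ where $N[v]:=N(v)\cup\{v\}$) and the non-adjacent case (\emph{false twins}, $N(u)=N(v)$), and show a length-two chain cannot mix the two. Indeed, if $u,v$ are false twins and $v,w$ are true twins, then $w\in N(v)=N(u)$ forces $(u,w)\in E$, i.e.\ $u\in N(w)$; but $u\notin N(v)$ together with $u\neq w$ gives $u\notin N(v)\setminus\{w\}=N(w)\setminus\{v\}$, and since $u\neq v$ this yields $u\notin N(w)$, a contradiction (the mirror case is symmetric). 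Once mixing is excluded, each pure case is transitive by a one-line set computation. With transitivity in hand, the equivalence classes form a valid neighborhood partition, and conversely any block of any neighborhood partition consists of pairwise-same-type vertices and hence lies inside a single class; thus no partition has fewer blocks than there are classes, so the class partition is the unique minimum-width one and can be computed in polynomial time by pairwise type comparison.

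I expect the main obstacle to be the transitivity argument of part 3: the natural-looking claim that the type relation is transitive is in fact delicate, since it fails for the naive twin relation unless one observes that a false-twin pair and a true-twin pair cannot be linked through a common vertex. Getting this case analysis exactly right (rather than the clique-width construction or the easy separating examples) is where the care is needed.
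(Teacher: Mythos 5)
Your proof is correct. Parts 1 and 2 follow essentially the same route as the paper: the same type-chasing argument yields the clique/independent-set and join/no-join dichotomies, the clique-width bound uses the identical $(w+1)$-label expression with one scratch label for building the clique blocks, and your separating examples ($K_n$ and the path) are interchangeable with the paper's ($K_{n,n}$ and the path). The genuine divergence is in part 3. The paper computes a minimum-width partition incrementally: it orders the vertices, maintains an optimal partition of the graph induced by the first $k$ vertices, and uses the monotonicity of $nd$ under induced subgraphs to argue that the next vertex either joins an existing block or certifies, via $w+1$ pairwise-incompatible vertices, that a new block is unavoidable. You instead prove outright that ``same type'' is an equivalence relation --- splitting into true twins ($N[u]=N[v]$) and false twins ($N(u)=N(v)$) and showing a length-two chain cannot mix the two kinds --- and then read off the minimum partition as the class partition. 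Your route is somewhat more self-contained: the paper's lower-bound step quietly requires that vertices lying in distinct blocks of an optimal partition have pairwise different types, which is exactly the transitivity fact you establish (the paper only asserts the equivalence-relation property in a one-sentence aside at the end of its part-1 argument and never proves it). What the paper's formulation buys is an explicit incremental algorithm; what yours buys is, in addition, the uniqueness of the minimum-width partition and a transparent correctness argument. The point you flag as delicate --- that naive transitivity of the twin relation fails unless mixed true/false chains are excluded --- is real, and your case analysis handles it correctly.
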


\begin{proof}

For the first statement, to show that every $V_i$ induces either a clique or an independent set, we may assume
that $|V_i|\ge 3$, otherwise the statement is trivial. Suppose that some $V_i$ includes at least one edge $(u,v)$.
Then for every other pair of vertices $w,w'$ we know that $w$ must be connected to $v$ since $w$ and $u$ have the
same type. With a symmetric argument we conclude that all the edges $(w,u),(w,v),(w',u),(w',v)$ must exist in the
graph. Finally, because $w$ and $u$ have the same type and we concluded that $(w',u)$ is an edge, we must have
$(w,w')$ as well. This is true for any pair of vertices $(w,w')$ so if $V_i$ has at least one edge it is a clique.
Another way to see this observation is to say that the property of two vertices having the same type is an
equivalence relation.

For the edges between $V_i$ and $V_j$, suppose that there exists at least an edge $(u,v)$ between them and let
$w\in V_i$, $w'\in V_j$. $v$ has the same type as $w'$, therefore $(u,w')$ must be an edge. Now, $w$ has the same
type as $u$ so $(w,w')$ must also be an edge, and once again this is true for any $w,w'$.

We have already shown the first part of the second statement. For the part with cliquewidth, we remind the reader
that the graphs of cliquewidth $k$ are those which can be constructed by repeated application of the following
operations: introducing a new vertex with a label in $\{1,\ldots,k\}$, joining all vertices of label $i$ with all
vertices of label $j$, renaming all vertices of label $i$ to label $j$ and taking disjoint union of two graphs of
cliquewidth at most $k$. We must show how to construct a graph in such a way starting from a neighborhood
partition of width $w$, using at most $w+1$ labels. The labels in $\{1,\ldots,w\}$ will only be used for the
vertices of the corresponding set in the partition, while the extra label will be used to construct the cliques.
For each $V_i$, if $V_i$ is an independent set introduce $|V_i|$ new vertices with label $i$. If $V_i$ is a clique
repeat $|V_i|$ times: introduce a new vertex of label $w+1$, join all vertices of label $i$ to $w+1$ and rename
$w+1$ to $i$. After all the vertices have been introduced, for all $i,j$ for which the graph had all edges between
$V_i$ and $V_j$ join the vertices labeled $i$ with those labeled $j$.

To see why treewidth is incomparable to neighborhood diversity consider the examples of a complete bipartite graph
$K_{n,n}$ and a path on $n$ vertices.

Finally, let us argue why neighborhood diversity is computable in polynomial time. First, observe that
neighborhood diversity is closed under the taking of induced subgraphs, that is, if $G'(V',E')$ is an induced
subgraph of $G(V,E)$ then $nd(G')\le nd(G)$, because a neighborhood partition of $G$ is also valid for $G'$. We
will work inductively: order the vertices of the input graph $G$ in an arbitrary way and suppose that we have
found an optimal neighborhood partition of the graph induced by the first $k$ vertices into $w$ sets,
$V_1,V_2,\ldots, V_w$. From our observation regarding induced subgraphs we know that the optimal partition of the
graph induced by the first $k+1$ vertices will need at least $w$ sets. Let $u$ be the next vertex. There are two
cases: either $u$ can be placed in some $V_i$ giving us a valid and optimal neighborhood partition of the first
$k+1$ vertices or not, and this can easily be verified in polynomial time. In the second case, there must exist in
each $V_i$ a vertex $v_i$ such that $v_i$ and $u$ have different types. This means that we have a set of $w+1$
vertices which have mutually incompatible types, which implies that the optimal neighborhood partition needs at
least $w+1$ sets. This can be achieved by adding to the partition we have a new singleton set $\{u\}$. \qed

\end{proof}

Taking into account the observations of Theorem \ref{thm:ndgeneral} we
summarize what we know about the graph-theoretic and algorithmic properties of
neighborhood diversity and related measures in Figure \ref{fig:hier}.

\begin{figure}[htb]

\begin{tabular}{cc}

\begin{minipage}{0.3\textwidth}\centering \includegraphics[scale=0.3]{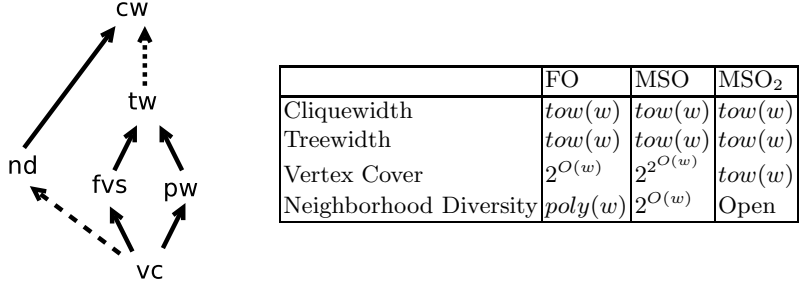}\end{minipage} & 
 
 \begin{minipage}{0.7\textwidth} 
 
\begin{tabular}{|l|l|l|l|} 
\hline 
			& FO			& MSO				& MSO$_2$ \\
\hline
Cliquewidth		& $tow(w)$		&$tow(w)$			&$tow(w)$	\\
Treewidth		&$tow(w)$ 		&$tow(w)$			&$tow(w)$	\\
Vertex Cover		&$2^{O(w)}$		&$2^{2^{O(w)}}$			&$tow(w)$	\\
Neighborhood Diversity	&$poly(w)$		&$2^{O(w)}$			&Open	\\
\hline

\end{tabular}
 
 \end{minipage}

\end{tabular}

\caption{A summary of the relations between neighborhood diversity and other
graph widths. Included are cliquewidth, treewidth, pathwidth, feedback vertex
set and vertex cover. Arrows indicate generalization, for example bounded
vertex cover is a special case of bounded feedback vertex set. Dashed arrows
indicate that the generalization may increase the parameter exponentially, for
example treewidth $w$ implies cliquewidth at most $2^w$. The table summarizes
the best known model checking algorithm's dependence on each width for the
corresponding logic.  } \label{fig:hier}

\end{figure}

There are several interesting points to make here. First, though our work is motivated by a specific goal, beating
the lower bounds that apply to graphs of bounded treewidth by concentrating on a special case, it seems that what
we have achieved is at least somewhat better; we have managed to improve the algorithmic meta-theorems that were
known by focusing on a class which is not necessarily smaller than bounded treewidth, only different. However, our
class is a special case of another known width which generalizes treewidth as
well, namely cliquewidth. Since the lower bound results which apply to
treewidth apply to cliquewidth as well, this work can perhaps be viewed more
appropriately as an improvement on the results of \cite{CourcelleMR00}  for
bounded cliquewidth graphs.

Second, and perhaps more interesting, is the fact that in this paper we have
almost entirely ignored the case of MSO$_2$ logic, focusing entirely on
MSO$_1$. The very interesting hardness results shown in \cite{FominGLS09}
demonstrate that the tractability of MSO$_2$ logic is in a sense the price one
has to pay for the additional generality that cliquewidth provides over
treewidth. Thus, a natural question to ask is whether this is the case with
neighborhood diversity as well; is it true that in the process of generalizing
from vertex cover (where MSO$_2$ is linear-time decidable by Courcelle's
theorem) to neighborhood diversity we have sacrificed MSO$_2$ logic?
Furthermore, it is natural to ask if the currently known results for MSO$_2$
logic can be improved in the same way as we did for MSO$_1$, either for
neighborhood diversity or just for bounded vertex cover.

Though we cannot yet fully answer the above questions related to MSO$_2$, we
can offer some first indications that this direction might merit further
investigation. In \cite{FominGLS09} it is shown that MSO$_2$ model checking is
not fixed-parameter tractable when the input graph's cliquewidth is the
parameter by considering three specific MSO$_2$-expressible problems and
showing that they are W-hard. The problems considered are Hamiltonian cycle,
Graph Chromatic Number and Edge Dominating Set. Even though we will not provide
a general meta-theorem to show that MSO$_2$ logic is tractable for bounded
neighborhood diversity we will show that at least it is impossible to show that
it is intractable by considering these problems. In other words, we will show
how these three problems can be solved efficiently on graphs of small
neighborhood diversity. Since small neighborhood diversity is a special case of
small cliquewidth, where these problems are hard, this result could be of
independent interest.

\begin{theorem}

Given a graph $G$ whose neighborhood diversity is $w$, there exist algorithms running in time $O(f(w)\cdot
poly(|G|))$ that decide Hamiltonian cycle, Graph Chromatic Number and Edge Dominating Set.

\end{theorem}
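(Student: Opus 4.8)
The plan is to exploit the rigid structure that Theorem~\ref{thm:ndgeneral} imposes on a graph of bounded neighborhood diversity. Fix a neighborhood partition $V_1,\ldots,V_w$. By part (1) of that theorem, each $V_i$ is either a clique or an independent set, and for each pair $i\neq j$ either all or none of the edges between $V_i$ and $V_j$ are present. Thus $G$ is completely described by its sizes $n_i=|V_i|$, one bit per class recording clique-vs-independent, and a quotient graph $H$ on $w$ vertices recording which pairs of classes are fully joined --- an amount of data bounded by a function of $w$ alone. The single guiding observation for all three problems is that vertices inside a common class are interchangeable, so an optimal solution need only be specified by \emph{counts} describing how it interacts with each class (for the chromatic number) or with each pair of classes (for the two edge problems). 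This lets us recast each problem as an integer linear program whose number of variables is bounded by $f(w)$, and then invoke the classical result of Lenstra that integer programming with a fixed number of variables is solvable in fixed-parameter-tractable time.

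For Graph Chromatic Number I would note that a color class is precisely an independent set of $G$, and by the structure above such a set is determined up to interchangeability by the subset $I\subseteq\{1,\ldots,w\}$ of classes it meets, where $I$ must be an independent set of the quotient graph $H$; moreover a color class may absorb an entire independent-type class it meets but exactly one vertex of each clique-type class it meets. Introducing a variable $y_I\ge 0$ for each of the at most $2^w$ independent sets $I$ of $H$, the number of colors is $\sum_I y_I$, subject to $\sum_{I\ni i} y_I \ge n_i$ for every clique-type class $i$ and $\sum_{I\ni i} y_I\ge 1$ for every nonempty independent-type class $i$. Minimizing the objective yields the chromatic number, and the program has at most $2^w$ variables. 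Edge Dominating Set (equivalently, minimum maximal matching) can be handled in the same spirit: a solution is determined up to interchangeability by how many chosen edges run between each pair of classes and how many run inside each clique-type class, a vector of $O(w^2)$ nonnegative integers. The domination requirement is that the vertices saturated by the chosen edges form a vertex cover of $G$, which by the join structure translates into linear constraints on these counts, together with a linear link between the number of saturated vertices of each class and the number of incident chosen edges; minimizing the total edge count is again an ILP in $O(w^2)$ variables.

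The hardest of the three is Hamiltonian cycle, and I expect the global connectivity requirement to be the main obstacle. As before, a Hamiltonian cycle is captured up to interchangeability by an integer multigraph on the $w$ classes recording how many cycle edges join each pair (and lie inside each clique-type class), and the degree condition that each class $i$ contributes exactly $2n_i$ endpoints is linear. The difficulty is that not every degree-feasible multigraph is realizable as a single spanning cycle rather than a disjoint union of shorter cycles; enforcing connectivity is exactly the part that does not fit naively into an ILP. The plan is to argue that realizability as one cycle is governed by a connectivity condition on the quotient support together with the class sizes, so that one can first guess the (bounded) connected support pattern and then solve the residual counting problem as an ILP, and to verify that a degree-feasible count over a connected support can always be stitched into an actual Hamiltonian cycle using the interchangeability of the vertices within each class.
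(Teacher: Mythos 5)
Your overall strategy --- pass to the $w$-vertex quotient graph, use the interchangeability of vertices within a class to describe a solution by a bounded vector of counts, and solve the resulting bounded-variable integer programs via Lenstra --- is sound, and for Hamiltonian cycle it essentially coincides with the paper's proof: the paper likewise counts (arc) multiplicities between classes, imposes the Eulerian degree conditions together with the visit bounds, and invokes Lenstra, and it is in fact \emph{less} explicit than you are about connectivity of the support, which is the one non-linear condition and which your plan of guessing the support handles correctly. Your stitching claim does go through: in any Eulerian circuit of the contracted multigraph, a visit to class $i$ that uses $k$ internal edges consumes $k+1$ vertices, and summing over the $n_i - x_{ii}$ visits gives exactly $x_{ii} + (n_i - x_{ii}) = n_i$, so every degree-feasible vector with connected support (and $x_{ii}\le n_i-1$, modulo degenerate cases such as the entire cycle lying in one clique class) lifts to a Hamiltonian cycle. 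For the other two problems you genuinely diverge from the paper: it solves chromatic number by brute force over the $w^w$ colorings of the quotient under a clique-size weight function, and edge dominating set by enumerating the $2^w$ vertex covers of the quotient and running two maximum-matching computations per guess with an exchange argument for optimality. Your uniform ILP treatment, including the set-cover-style coloring program with a variable $y_I$ for each independent set $I$ of the quotient, is correct and arguably cleaner, at the price of $2^w$ variables and hence a worse dependence on $w$.

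The one genuine gap is in your edge dominating set program. The condition ``the saturated vertices form a vertex cover of $G$'' is \emph{not} linear in the edge counts: it says that for every edge $ij$ of the quotient, class $i$ is fully saturated \emph{or} class $j$ is fully saturated, a disjunction. You must either add $w$ zero--one indicators $z_i$ with $s_i \ge n_i z_i$ and $z_i + z_j \ge 1$ on quotient edges (still $O(w^2)$ variables, so Lenstra applies), or enumerate the set of fully saturated classes externally --- which is precisely the paper's enumeration of vertex covers of $G'$. Relatedly, the ``linear link'' $s_i = 2x_{ii} + \sum_{j\ne i} x_{ij}$ between saturated vertices and chosen edges holds only for matchings; for general edge sets the number of saturated vertices is not determined by the counts, so you should state up front (and justify, as the paper does) that a minimum edge dominating set may be taken to be a maximal matching. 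Do not forget the within-class condition $s_i \ge n_i - 1$ for clique classes of size at least two. With these repairs all three arguments are complete.
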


\begin{proof}

We will make use of an auxiliary graph $G'$ on $w$ vertices. Each vertex of $G'$ corresponds to a set in an
optimal neighborhood partition of $G$ and two vertices of $G'$ have an edge iff the corresponding sets of the
partition of $G$ have all possible edges between them.

First, for the chromatic number. Observe that if a set $V_i$ of a neighborhood partition of $G$ induces an
independent set, we can delete all of its vertices but one, without affecting the graph's chromatic number,
because there always exists an optimal coloring where all the vertices of $V_i$ take the same color. So, we can
assume without loss of generality that all the sets $V_i$ of a neighborhood partition of $G$ induce cliques (some
of them of order one).

Consider now a coloring of the graph $G'$ with the following objective function: for each color $i$ used, its
weight is the size of the largest clique that corresponds to a vertex of $G'$ colored with $i$. The objective is
to minimize the sum of the weights of the colors used. It is not hard to see that this problem can be solved in
time $O(w^w\cdot \log n)$ by checking through all possible colorings of the vertices of $G'$. Also, from such a
coloring of $G'$ we can infer a coloring of $G$ that uses as many colors as the weight of the coloring: for every
color $i$ used in $G'$ create a new set of colors of size equal to the color's weight. This is sufficient to color
all the cliques of $G$ that correspond to vertices of $G'$ colored with $i$.

What remains is to argue why this leads to an optimal coloring. Suppose we have an optimal coloring of $G$ and
order the sets of a neighborhood partition in order of decreasing size, that is, $|V_1|\ge|V_2|\ge \ldots \ge
|V_w|$. We will say that $V_i$ and $V_j$ have ``similar'' colors in this optimal coloring of $G$ when there is a
color that appears in both $V_i$ and $V_j$. From the coloring of $G$ we infer a coloring of $G'$ as follows: while
there are still uncolored vertices of $G'$, take the first set of the partition of $G$ (in order of size) that
corresponds to a still uncolored vertex of $G'$. Use a new color for its corresponding vertex in $G'$ and also for
all the vertices that correspond to sets with colors similar to it.

When we are done, we will have a proper coloring of $G'$, because if two sets $V_i$,$V_j$ are joined by an edge
they cannot have similar colors. Furthermore, the weight of the coloring of $G'$ we obtain is a lower bound on the
number of colors used in the original coloring of $G$ we assumed. This is because when we pick a set $V_i$ and use
it to introduce a new color we know that it does not have similar colors with any of the sets we have picked so
far. Because all the sets picked induce cliques and do not have similar colors (i.e. no color is reused) we know
that the original coloring of $G$ uses at least as many colors as the sum of the sizes of the sets picked. Thus,
if our algorithm found that the optimal solution to the weighted coloring problem for $G'$ has weight $w$, this
means that $w$ colors are needed to color $G$, because a coloring of $G$ with $w-1$ colors would give a solution
to the coloring problem of $G'$ with weight at most $w-1$.

For the Hamiltonian cycle problem, we will once again use the graph $G'$. We define the weight of every vertex of
$G'$ to be the size of its corresponding set in the neighborhood partition of $G$. Now, the problem of finding a
Hamiltonian cycle in $G$ can be reduced to the problem of finding a closed walk of $G'$, such that every vertex
that corresponds to an independent set is visited a number of times exactly equal to its weight, while every
vertex corresponding to a clique is visited at least once and at most as many times as its weight.

This problem of looking for a walk on $G'$ can be solved in time $O(n^{w^{2}})$. Replace each edge with two
directed arcs of opposite direction. Now, for each of the at most $w^2$ arcs, we must decide how many times it
will be used, a value upper-bounded by $n$. If we have decided on such values for all arcs we can easily check if
a walk with the desired properties can be made from them. Replace each arc with a number of parallel arcs of the
same direction equal to the value decided for it. Now, we can obtain a walk if the resulting multi-graph is
Eulerian (that is, all vertices have the same in-degree as out-degree) and also the in-degrees of the vertices
follow the conditions we have stated for the number of times the vertex must be visited.

In order to improve this to an FPT algorithm, we rely on an old but seminal
result by Lenstra \cite{lenstra1983integer} (later further improved by Kannan
\cite{kannan1987minkowski}), which states that the feasibility of an ILP
programs of size $n$ with $k$ variables can be solved in time $f(k)\cdot
poly(n)$, i.e.  bounded-variable ILP is FPT.  This is a result that has
attracted considerable interest in the parameterized complexity community and
it has long been a topic of interest to find examples of its application.  Here
we observe that in the above algorithm we are trying to decide on values for
$w^2$ variables. For each variable the constraints can easily be expressed as
linear inequalities: for each vertex we have to make sure that the in-degree is
equal to the out-degree and also that the in-degree falls in a specified
interval. Therefore, by expressing our problem as a system of linear
inequalities we obtain an FPT algorithm.

Finally, in the edge dominating set problem, we are asked to find a set of edges of minimum size such that all
other edges share an endpoint with one of the edges we selected. This problem is equivalent to the minimum maximal
matching problem, where we are trying to find a minimum size independent set of edges that cannot be extended by
picking another edge of the graph. To see why the optimal solution to the edge dominating set problem is always a
matching, suppose that we have a solution $S$ which includes two edges $(u,v),(u,v')$. Now, if all the neighbors
of $v'$ are incident on an edge of $S$ we can simply remove $(u,v')$ from $S$ and improve the size of the
solution. If there is a neighbor $w$ of $v'$ that is not incident on an edge of $S$ we can replace $(u,v')$ with
$(w,v')$ in $S$. To see why a solution to the edge dominating set problem is a maximal matching, suppose that it
was not. Then there would be two unmatched vertices connected by an edge, which would imply that this edge is not
dominated.

Our algorithm will proceed as follows: for every minimal vertex cover $V'$ of $G'$ repeat the following (there are
at most $2^w$ vertex covers to be considered): from $V'$ infer a vertex cover of $G$ by placing into the vertex
cover all the vertices that belong in a type whose corresponding vertex is in $V'$. Also place in the vertex cover
all but one (arbitrarily chosen) vertex of every vertex type that induces a clique but whose corresponding vertex
is not in $V'$. Call the resulting vertex cover of $G$ $V''$. Find a maximum matching on the graph induced by
$V''$, call it $M_1$. Take the bipartite graph induced by the unmatched vertices of $V''$ and $V\setminus V''$ and
find a maximum matching there, call it $M_2$. The solution produced is $M_1\cup M_2$. After repeating this for all
vertex covers of $G'$, pick the smallest solution.

Now we need to argue why this solution is optimal. Let $S$ be an optimal solution for $G$. We say that a set of
the neighborhood partition $V_i$ is full if all of its vertices are incident on edges of $S$. If we take in $G'$
the corresponding vertices of the full sets of $G$, they must form a vertex cover of $G'$, otherwise there would
be two neighboring vertices with neither having any edge of $S$ incident to it, which would mean that $S$ is not
maximal. This is a vertex cover of $G'$ considered by our algorithm, since our algorithm considers all vertex
covers of $G'$, call it $V'$. Let $V''$ be again the vertex cover of $G$ our algorithm derived from $V'$ by also
including a minimal number of vertices from each remaining clique. Let $V^*$ be the set of vertices of $G$
incident on some edge of $S$, which must also be a vertex cover of $G$. Without loss of generality we will assume
that $V''\subseteq V^*$, because the two vertex covers of $G$ agree on taking all vertices of the full sets and
$V''$ takes a minimal number of vertices from every other clique. Even if $V^*$ leaves out a different vertex from
some clique because all the vertices of the clique have the same neighbors we can apply an exchanging argument and
transform $S$ appropriately without increasing its size so that both sets leave out the same vertex.

Now note that $|M_2|\le |V''|-2|M_1|$. So our algorithm's solution has size at most $|V''|-|M_1|$. On the other
hand the optimal solution $S$ includes some edges with both endpoints in $V''$, call this set $S_1$. Because $M_1$
is a maximum matching, $|S_1|\le |M_1|$. From what we have so far, the fact that all vertices of $V^*$ are matched
by $S$ and the fact that $V''$ is a vertex cover, so $V^*\setminus V''$ induces no edges we have $|V^*| = |V^*\cap
V''| + |V^*\setminus V''| = |V''| + |V''|-2|S_1| \ge 2|V''| -2 |M_1|$. This implies that $|S|\ge |V''|-|M_1|$
which concludes the proof. \qed

\end{proof}

\section{Conclusions and Open Problems}

The vast majority of treewidth-based algorithmic results, including Courcelle's
theorem, rely on the exploitation of small graph separators. The limit of this
technique is that in the worst case its complexity can be a tower of
exponentials, depending on the problem at hand. In this paper we have exploited
a different technique which groups vertices into equivalence classes, depending
on their neighborhoods. Using this we were able to offer a huge improvement on
the currently known meta-theorems for MSO and FO tractability for the special
case of graphs of bounded vertex cover, and we also showed that our
meta-theorems are in some sense ``optimal''. In the process we defined a new
graph complexity metric which measures how well our technique can be applied on
a given graph.

One direction for future research now is the further investigation of the
properties of neighborhood diversity. From the results of this paper we know
that small neighborhood diversity implies tractability for FO and
MSO-expressible problems and we also know that neighborhood diversity can be
solved optimally in polynomial time (a rarity in the realm of graph widths!).
The main theoretical problem left open is whether MSO$_2$ logic is tractable
for small neighborhood diversity. The main practical problem on the other hand
is whether graphs of small neighborhood diversity do appear often in common
applications. It is worth remembering that treewidth is a successful complexity
measure not only because many problems are solvable for graphs of small
treewidth but also because empirically many practical instances seem to have
small treewidth. Is the same true for neighborhood diversity? Any evidence
pointing to a positive answer to this question would greatly motivate further
research on the topic.

Other directions to consider along the lines of this paper are, first, trying
to achieve results similar to this paper's for other restrictions of treewidth.
The most notable case here is probably graphs of bounded max leaf number,
another problem posed explicitly by Fellows. Second, another interesting next
step would be to attempt to prove tractability (or intractability) for logics
larger than MSO$_2$ for bounded vertex cover, for example for a logic that
includes the ability to quantify over orderings of the vertex set. To this end,
the results of \cite{FellowsLMRS08} give some positive indication that this may
be possible.

\bibliographystyle{plain}
\bibliography{metatheorems}

\end{document}